\documentclass[reprint,amsmath,amssymb,aps,prd,superscriptaddress]{revtex4-2}

\usepackage{hyperref}
\usepackage{graphicx}
\usepackage{tikz-cd}

\hypersetup{
  colorlinks=true,
  linkcolor=blue,
  citecolor=blue,
  urlcolor=blue
}

\hypersetup{
  pdftitle={Induced Couplings and Causal Bounds from Nondegenerate Dirac Lagrangians},
  pdfauthor={M. Netz-Marzola, D. Vasak, V. Denk, J. Struckmeier, and H. St\"ocker}
}

\newtheorem{proposition}{Proposition}

\newcommand{\afffias}{Frankfurt Institute for Advanced Studies (FIAS), Ruth-Moufang-Str.~1, 60438 Frankfurt am Main, Germany}
\newcommand{\affgoethe}{Physics Department, Goethe University, Max-von-Laue-Str.~1, 60438 Frankfurt am Main, Germany}

\begin{document}

\title{Induced Couplings and Causal Bounds from Nondegenerate Dirac Lagrangians}

\author{M.~Netz-Marzola}
\email{marzola@fias.uni-frankfurt.de}
\affiliation{\afffias}
\affiliation{\affgoethe}
\author{D.~Vasak}
\email{vasak@fias.uni-frankfurt.de}
\affiliation{\afffias}
\author{V.~Denk}
\email{denk@fias.uni-frankfurt.de}
\affiliation{\afffias}
\affiliation{\affgoethe}
\author{J.~Struckmeier}
\email{struckmeier@fias.uni-frankfurt.de}
\affiliation{\afffias}
\affiliation{\affgoethe}
\author{H.~St\"ocker}
\email{stoecker@fias.uni-frankfurt.de}
\affiliation{\afffias}
\affiliation{\affgoethe}

\date{\today}

\begin{abstract}
The standard Dirac Lagrangian is linear in the field derivatives and therefore has a vanishing Hessian. We identify the minimal null deformations of this Lagrangian that make the covariant Legendre map
locally invertible. Imposing global phase invariance, reality, proper
Poincar\'e covariance, and absence of external background tensors leaves
the two-parameter spinorial bivector
$\mathcal E^{\mu\nu}=\ell\sigma^{\mu\nu}
+\ell_5{}^\star\sigma^{\mu\nu}$, where the star denotes the Hodge dual and
$\ell^2+\ell_5^2\neq0$. This extends the analysis of \cite{struckmeier2024pauli} by a parity-odd term. After minimal $U(1)$ gauging, this free null term is
no longer variationally trivial and induces magnetic- and
electric-dipole Pauli operators, together with an identically conserved
dipole current. These dipole terms make the species-dependent regularization lengths
directly constrainable by precision moment measurements. Metric-affine gauging in a Lorentz-spinor prescription
then produces spin-curvature, torsion, and nonmetricity corrections to
the Dirac operator. Applying the Velo--Zwanziger
criterion, we exclude all nonzero pure axial, pure trace, and mixed
axial--trace torsion backgrounds, as well as nonzero pure Weyl and
second-trace nonmetricity backgrounds. The combined trace-vector
nonmetricity sector is not excluded only when the effective trace vector vanishes. Tensor torsion and general tracefree
nonmetricity remain unclassified without further algebraic assumptions,
while the Levi--Civita limit preserves the metric light cone and leaves
only lower-order curvature-dependent mass terms.  Thus,
gauging a Legendre-regular Dirac representative turns a free variational
ambiguity into observable couplings, while non-Riemannian sectors are sharply constrained by causality bounds.
\end{abstract}

\maketitle

\section{Introduction}
\label{sec:introduction}

The standard Dirac Lagrangian is first order and linear in the spinor derivatives. This
feature is normally regarded as natural: it gives the Dirac equation directly
and leads, in canonical formulations, to the familiar primary constraints of
fermionic field theory. From the viewpoint of covariant Hamiltonian field
theory, however, the same feature has a different consequence. Since the
Lagrangian is less than quadratic in the spacetime derivatives of \(\psi\) and
\(\bar\psi\), the associated covariant Legendre map is singular. The
polymomenta conjugate to the spinor derivatives are functions of the fields
alone, not of the derivatives, and therefore the derivative variables cannot
be recovered from them. Thus the standard Dirac theory does not
directly define a regular covariant Hamilton, or De Donder--Weyl, system in
the ordinary sense \cite{dirac,henneaux92}.

This observation is not, by itself, a physical inconsistency. Singular
Lagrangians are common in relativistic field theory, and constrained
Hamiltonian methods provide the standard way to treat them. The point of the
present work is instead to exploit the variational ambiguity of the
Lagrangian. Adding a null Lagrangian, locally a total divergence, leaves the
Euler--Lagrange equations unchanged, but it can change the fiber Hessian and
therefore the covariant Legendre map. For the free Dirac field, this turns
Legendre regularization into a classification problem within the same free
variational class. The task is not merely to write down a regularizing
surface term, but to determine which first-order null terms are compatible
with the spinorial symmetries and actually remove the Legendre degeneracy.

This question is motivated in part by the non-degenerate Dirac representative
used in \cite{struckmeier2024pauli}. There, the Gasiorowicz
derivative-quadratic surface term renders the free Dirac Lagrangian
non-degenerate while preserving the free Dirac equation. After minimal 
coupling, the same term is no longer a divergence, because
gauge-covariant derivatives do not commute, and, in the electromagnetic case, the Pauli interaction is
generated. In this sense the length scale that regularizes the Legendre map
becomes a physical dipole coupling after gauging. The same idea was then
applied to curved spacetime, where Pauli-type spin-torsion and spin-curvature
terms arise.

The first purpose of the present paper is to replace this special choice by
a classification. We consider first-order, bilinear, charge-neutral, real,
proper Poincar\'e-covariant null deformations of the free Dirac Lagrangian,
without external background tensors, and ask which of them can make the
covariant Legendre map locally invertible. The result is the two-parameter
expression,
\[
  \partial_\mu\bar\psi\left(\ell\sigma^{\mu\nu}
  +
  \ell_5{}^\star\sigma^{\mu\nu}\right)\partial_\nu\psi,
\]
with \(\ell^2+\ell_5^2\neq0\), where a star denotes the Hodge dual. The Gasiorowicz term of
\cite{struckmeier2024pauli} is the parity-even member of this family. If
parity or charge-parity (CP) invariance is imposed, the second, dual branch is removed.

The second purpose is to study how the regularized Lagrangian interacts with gauge fields. Under minimal \(U(1)\) coupling, the parity-even parameter \(\ell\) induces the usual magnetic
Pauli-type operator, while the parity-odd parameter \(\ell_5\) induces its
electric-dipole-type partner.
For several Dirac species, these parameters should be read as
\((\ell_f,\ell_{5f})\), properties of the chosen
free representative of the fermion species, rather than coefficients
introduced specifically for a particular coupling. Electromagnetic moment measurements constrain the same
regularization lengths that also enter any other coupling fields.

Under spacetime gauging, we use a
Lorentz-spinor prescription. This yields a generalized
Dirac operator containing spin-curvature, torsion, and nonmetricity
corrections. In the Riemann--Cartan limit and in the parity-even branch, the
purely axial torsion sector reproduces
the Pauli-type coupling terms obtained from
gravitational dipole moments in nonminimal Poincar\'e gauge theory
\cite{obukhov2014spin}.

In non-Riemannian sectors, the gravitational coupling terms can
also modify the principal part of the Dirac equation and hence its
characteristic structure. We analyze this using the Velo--Zwanziger
criterion, treating the non-Riemannian background locally as fixed.
This is the same diagnostic used in \cite{fabbri2019restrictions} to
restrict derivative torsion--spinor couplings, but here the relevant
terms are induced by Legendre regularization and minimal gauging. The
aim is to determine which torsion and nonmetricity components are
compatible with causal spinor front propagation for the induced
operator.

The Levi--Civita limit is qualitatively different, only lower-order scalar and pseudoscalar
curvature-dependent mass terms survive. Thus the torsionless,
metric-compatible sector (when
$\tilde{\mathcal L}_{\rm Gr}$ is chosen to be the Einstein--Hilbert
density) is not excluded by the
Velo--Zwanziger criterion. The restrictions derived below concern the
non-Riemannian sectors.

The results are summarized in Table~\ref{tab:VZ-summary}. In the
Riemann--Cartan axial--trace subsector, pure axial torsion, pure trace
torsion, and mixed axial--trace torsion all fail the Velo--Zwanziger test for every nonzero
background. This eliminates the causal windows reported in
\cite{struckmeier2024pauli} for the axial and vector torsion cases. Tensor
torsion is not excluded in general, since its characteristic determinant
does not reduce to a universal one-vector cone. In the torsionless
nonmetricity limit, the trace-vector sectors are similarly restricted, while
the tracefree sector remains open in general.

The resulting equations also connect naturally with SME-type fermion
operators. The general quadratic Dirac operator and its dispersion relation
are structured in \cite{kostelecky2013fermions} by decomposing all possible
corrections on the Clifford basis. At a locally frozen background, the
torsion and nonmetricity terms found here have the same algebraic character:
they enter as Clifford-valued coefficients in the Dirac operator. Similar
torsion--SME identifications, especially for constant torsion backgrounds,
were discussed in \cite{shapiro2002physical,kostelecky2008constraints}. In the
present work, however, these coefficients are fixed geometrically by the
Legendre-regularization scale and by the non-Riemannian fields.

Finally, the free regularization is closely related to the Lepagean
regularization of singular covariant Hamiltonian systems
\cite{krupkova2001legendre}. In that approach one does not necessarily add a
new term to the physical action. Instead, one modifies the
Poincar\'e--Cartan form by a \(2\)-contact term so that the Hamiltonian
Legendre map becomes regular while the Euler--Lagrange equations remain
unchanged. When this construction can be written as an equivalent ordinary
Lagrangian, the result is called a dedonderized Lagrangian. The
derivative-quadratic Dirac null term classified here has precisely this
structure at the free level.

This distinction suggests a possible alternative use of the same algebraic
bivector. In the route followed here, and in
\cite{struckmeier2024pauli}, the dedonderized Lagrangian is taken as the
matter action and then gauged. This produces the dipole terms, but also the
metric-affine principal-symbol obstructions analyzed below. A purely
Lepagean use would keep the correction invariant. That would not generate the Pauli-type
dipole couplings, but it may provide a regular covariant Hamiltonian fermion
sector without metric-affine propagation problems.

The paper is organized as follows. Section~\ref{sec:regular-dirac} reviews
covariant Legendre regularity, proves the singularity of the ordinary Dirac
Lagrangian, classifies the admissible null regularizations, and constructs
the corresponding covariant Hamiltonian. Section~\ref{subsec:qed} applies
minimal \(U(1)\) gauging and derives the induced dipole current, giving the corresponding parameter constraints
from the magnetic-moment and the electric-dipole bounds.
Section~\ref{sec:spacetime-metricaffine} gauges the theory to
metric-affine spacetime and derives the generalized Dirac operator in
the Riemann--Cartan, torsionless-nonmetric, and Levi--Civita limits.
Section~\ref{characteristics} computes the
characteristic determinant and applies the Velo--Zwanziger criterion to the
torsion and nonmetricity sectors summarized in Table~\ref{tab:VZ-summary}.
The appendices collect geometric intuition for fiberwise regularity, the
curvature--Clifford identities, and the comparison with the gravitational
dipole couplings of \cite{obukhov2014spin}.

Throughout the paper we use natural units \(\hbar=c=1\) and metric signature
\((+,-,-,-)\). The Dirac matrices satisfy
\[
  \{\gamma^\mu,\gamma^\nu\}=2\eta^{\mu\nu}\mathbf 1,
  \qquad
  \sigma^{\mu\nu}:=\frac{i}{2}[\gamma^\mu,\gamma^\nu],
\]
and
\[
  \gamma^5:=i\gamma^0\gamma^1\gamma^2\gamma^3,
  \qquad
  \varepsilon^{0123}=+1.
\]
With these conventions,
\[
  {}^\star\sigma^{\mu\nu}
  :=
  \frac12\varepsilon^{\mu\nu\rho\sigma}\sigma_{\rho\sigma}
  =
  -i\sigma^{\mu\nu}\gamma^5 .
\]
Spinor components are treated as classical field coordinates in the
covariant Legendre analysis, with \(\psi\) and \(\bar\psi\) varied
independently. Pure Levi--Civita terms are denoted by an overcircle.

\section{The regular Dirac Lagrangian}
\label{sec:regular-dirac}

\subsection{Regularity and the covariant Legendre map}
\label{subsec:regularity}

We consider an \(m\)-dimensional spacetime Minkowski manifold \(\mathcal{M}\), with
coordinates \(x^\mu\). Let
\begin{equation*}
  \phi^A=\phi^A(x)    
\end{equation*}
denote a collection of fields or field components. For a first-order Lagrangian theory,
we introduce independent derivative variables $v^A{}_\mu .$ Along an actual field configuration these are set equal to the derivatives of
the fields,
\begin{equation*}
  v^A{}_\mu=\partial_\mu \phi^A .
\end{equation*}

A first-order Lagrangian density is written locally as
\begin{equation*}
  L=\mathcal L(x^\mu,\phi^A,v^A{}_\mu)\,d^m x,
  \qquad
  d^m x=dx^1\wedge\cdots\wedge dx^m .
\end{equation*}

The covariant Legendre map associated with \(\mathcal L\) is the map
\begin{equation}
  \mathbb F\mathcal L:
  (x^\mu,\phi^A,v^A{}_\mu)
  \longmapsto
  (x^\mu,\phi^A,\pi_A{}^\mu),
  \label{eq:coordinate-legendre-map}
\end{equation}
where the covariant momenta are defined by
\begin{equation}
  \pi_A{}^\mu
  :=
  \frac{\partial\mathcal L}{\partial v^A{}_\mu}.
  \label{eq:covariant-momenta-coordinate}
\end{equation}
The Lagrangian is called regular if this map can be
locally inverted for the derivative variables \(v^A{}_\mu\). By the inverse
function theorem, this is equivalent to the nondegeneracy of the matrix
\begin{equation}
  W_{(A\mu)(B\nu)}
  =
  \frac{\partial \pi_A{}^\mu}{\partial v^B{}_\nu}
  =
  \frac{\partial^2\mathcal L}
       {\partial v^A{}_\mu\,\partial v^B{}_\nu}.
  \label{eq:fiber-hessian}
\end{equation}
Thus the fiberwise regularity condition is
\begin{equation}
  \det\!\left(W_{(A\mu)(B\nu)}\right)\neq0 .
  \label{eq:regularity-condition}
\end{equation}
Here the Hessian is viewed as a matrix whose rows and columns are indexed by
the combined labels \((A,\mu)\) and \((B,\nu)\).

Equivalently, regularity can be expressed infinitesimally. Let
\(\varepsilon\in\mathbb R\) be a small parameter and let
\(\delta v^A{}_\mu\) be an arbitrary variation of the derivative variables.
Under
\begin{equation*}
  v^A{}_\mu
  \mapsto
  v^A{}_\mu+\varepsilon\,\delta v^A{}_\mu ,
\end{equation*}
the induced first-order change of the momenta is
\begin{equation*}
  \delta \pi_A{}^\mu
  =
  W_{(A\mu)(B\nu)}\,\delta v^B{}_\nu .
\end{equation*}
Therefore regularity means that no nonzero infinitesimal change of the
derivative variables is invisible to the momenta:
\begin{equation*}
  W_{(A\mu)(B\nu)}\,\delta v^B{}_\nu=0
  \quad\Longrightarrow\quad
  \delta v^A{}_\mu=0 .
\end{equation*}

When the condition \eqref{eq:regularity-condition} holds, the derivative
variables can be solved locally as functions of the momenta,
\begin{equation*}
  v^A{}_\mu
  =
  v^A{}_\mu(x^\nu,\phi^B,\pi_B{}^\nu).
\end{equation*}
A local covariant Hamiltonian function may then be defined by
\begin{equation}
  \mathcal H(x,\phi,\pi)
  =
  \pi_A{}^\mu v^A{}_\mu(x,\phi,\pi)
  -
  \mathcal L\bigl(x,\phi,v(x,\phi,\pi)\bigr),
  \label{eq:coordinate-hamiltonian}
\end{equation}
with Hamiltonian density
\begin{equation*}
  H=\mathcal H\,d^m x .
\end{equation*}
If the Hessian is degenerate, the Legendre map cannot be invertible. Many physically relevant field theories are singular in this sense. In particular, any first-order Lagrangian that is merely linear in the derivative variables has a vanishing Hessian and therefore fails to define an invertible covariant Legendre map. The standard free Dirac Lagrangian is a famous instance of this type of degeneracy \cite{dirac,henneaux92}.

\subsection{Singularity of the standard free Dirac Lagrangian}
\label{subsec:free-dirac-singular}

We now specialize to a single free Dirac field in four-dimensional Minkowski
spacetime. The field variables are a Dirac spinor \(\psi\) and its Dirac
adjoint \(\bar\psi\), treated as independent variables in the variational
calculus. In the notation of the previous subsection,
\begin{equation*}
  \phi^A=(\psi,\bar\psi).
\end{equation*}
The corresponding independent derivative variables are
\begin{equation*}
  v^A{}_\mu=(\psi_\mu,\bar\psi_\mu),
\end{equation*}
which are set equal to the actual derivatives along a field configuration:
\begin{equation*}
  \psi_\mu=\partial_\mu\psi,
  \qquad
  \bar\psi_\mu=\partial_\mu\bar\psi .
\end{equation*}
Derivatives with respect to \(\psi\), \(\bar\psi\), \(\psi_\mu\), and
\(\bar\psi_\mu\) are understood componentwise, with \(\psi\) and
\(\bar\psi\) treated as independent field coordinates.

The standard symmetrized free Dirac Lagrangian density is
\begin{equation*}
  L_D=\mathcal L_D\,d^4x,
\end{equation*}
where, in terms of the independent derivative variables,
\begin{equation}
  \mathcal L_D
  =
  \frac{i}{2}
  \left(
    \bar\psi\gamma^\mu\psi_\mu
    -
    \bar\psi_\mu\gamma^\mu\psi
  \right)
  -
  m\bar\psi\psi .
  \label{eq:standard-dirac-lagrangian}
\end{equation}

The covariant momentum conjugate to \(\psi_\mu\) has the algebraic type of a
barred spinor and is denoted by \(\bar\pi^\mu\), while the covariant momentum
conjugate to \(\bar\psi_\mu\) has the algebraic type of an unbarred spinor
and is denoted by \(\pi^\mu\):
\begin{equation}
  \bar\pi^\mu
  :=
  \frac{\partial\mathcal L_D}{\partial\psi_\mu}
  =
  \frac{i}{2}\bar\psi\gamma^\mu,
  \qquad
  \pi^\mu
  :=
  \frac{\partial\mathcal L_D}{\partial\bar\psi_\mu}
  =
  -\frac{i}{2}\gamma^\mu\psi .
  \label{eq:dirac-polymomenta}
\end{equation}
Thus the covariant Legendre map associated with \(\mathcal L_D\) is
\begin{equation*}
  \mathbb F\mathcal L_D:
  (x^\mu,\psi,\bar\psi,\psi_\mu,\bar\psi_\mu)
  \longmapsto
  (x^\mu,\psi,\bar\psi,\bar\pi^\mu,\pi^\mu),
\end{equation*}
with
\begin{equation*}
  \bar\pi^\mu
  =
  \frac{i}{2}\bar\psi\gamma^\mu,
  \qquad
  \pi^\mu
  =
  -\frac{i}{2}\gamma^\mu\psi .
\end{equation*}
The right-hand side depends only on the fields \(\psi\) and \(\bar\psi\), not
on the derivative variables \(\psi_\mu\) and \(\bar\psi_\mu\). Hence the
Legendre map cannot be inverted to solve for the derivatives.

Equivalently, \(\mathcal L_D\) is linear in the derivative variables
\(\psi_\mu\) and \(\bar\psi_\mu\). Therefore all second derivatives of
\(\mathcal L_D\) with respect to these variables vanish:
\begin{equation*}
  \frac{\partial^2\mathcal L_D}
       {\partial\psi_\mu\,\partial\psi_\nu}
  =
  \frac{\partial^2\mathcal L_D}
       {\partial\bar\psi_\mu\,\partial\bar\psi_\nu}
  =
  \frac{\partial^2\mathcal L_D}
       {\partial\bar\psi_\mu\,\partial\psi_\nu}
  =
  \frac{\partial^2\mathcal L_D}
       {\partial\psi_\mu\,\partial\bar\psi_\nu}
  =
  0 .
\end{equation*}
Thus the full Hessian of the
Dirac Legendre map is identically zero:
\begin{equation*}
  W_D=0,
  \qquad
  \det W_D=0 .
\end{equation*}
The standard free Dirac Lagrangian is therefore singular in the covariant
Legendre sense.

Instead of determining the derivative variables from the momenta, the
Legendre map imposes the primary constraints
\begin{equation}
  \bar\pi^\mu
  -
  \frac{i}{2}\bar\psi\gamma^\mu
  =
  0,
  \qquad
  \pi^\mu
  +
  \frac{i}{2}\gamma^\mu\psi
  =
  0 .
  \label{eq:dirac-primary-relations}
\end{equation}

\subsection{Minimal null regularizations}
\label{subsec:null-regularizations}

We now ask whether this Legendre degeneracy
can be lifted without changing the free flat-space Dirac equation. Thus the
regularizing term must change the derivative-fiber Hessian, but it must be
variationally trivial in Minkowski space.

A regularization of the free Dirac Legendre map must therefore add a nonzero
quadratic contribution in the derivative variables. The geometric examples of
Appendix~\ref{app:visualizing-regularity} also show that this contribution
need not be positive or convex -- a saddle-type quadratic term is sufficient,
provided that its Hessian has no null directions.

At the same time, preserving the free Dirac equation requires the deformation
to be a null Lagrangian. We therefore seek a first-order local deformation
\[
  \mathcal L_D\longmapsto \mathcal L_D+\Delta\mathcal L
\]
which is locally a total divergence in Minkowski space,
\begin{equation*}
  \Delta\mathcal L=\partial_\mu F^\mu ,
\end{equation*}
for a local current \(F^\mu\)
\cite{Hojman1983,KrupkaMusilova1998,PaleseRossiWinterrothMusilova2016}.

We consider the standard structural requirements appropriate to the free Dirac system: the global phase symmetry
\begin{equation*}
  \psi\mapsto e^{i\alpha}\psi,
  \qquad
  \bar\psi\mapsto \bar\psi\,e^{-i\alpha},
\end{equation*}
translation invariance under \(\mathbb R^{1,3}\), standard
\(SL(2,\mathbb C)\)-covariance of the spinor sector, absence of external
background fields, and reality of the action.

Since the Hessian is only sensitive to the part of the Lagrangian which is at least quadratic in
\((\psi_\mu,\bar\psi_\mu)\), lower-derivative terms cannot contribute to
regularity.  Conversely, higher charge-neutral nonlinearities do not give a
minimal regularization of the free system, once the global phase symmetry
is required, any term beyond the bilinear Dirac sector contains at least two
unbarred and two barred spinorial factors.  Such a
term is at least quartic in
\[
  \psi,\bar\psi,\psi_\mu,\bar\psi_\mu .
\]
After two differentiations with respect to the derivative variables, its
contribution to the Hessian still vanishes at the free vacuum
\begin{equation*}
  (\psi,\bar\psi,\psi_\mu,\bar\psi_\mu)=(0,0,0,0) .
\end{equation*}
They therefore cannot provide a pointwise regularization of the free Dirac
Legendre map at the origin. The Hessian-relevant sector is the
quadratic Dirac sector.
Henceforth, we drop the $\psi_\mu$, $\bar\psi_\mu$ notation in favor of the more standard $\partial_\mu\psi$, $\partial_\mu\bar\psi$.

\begin{proposition}[Hessian-relevant representative]
\label{prop:hessian-relevant-null-sector}
Every first-order, charge-neutral, bilinear null deformation has, modulo null
terms with identically vanishing Hessian, a representative of the form
\begin{equation}
  \Delta\mathcal L
  =
  (\partial_\mu\bar\psi)\,E^{[\mu\nu]}\,(\partial_\nu\psi),
  \label{eq:DL-antisymmetric-general}
\end{equation}
where \(E^{[\mu\nu]}\) is a spacetime-independent, Dirac-matrix-valued
coefficient.
\end{proposition}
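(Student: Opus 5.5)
The plan is to write down the most general first-order, charge-neutral bilinear deformation, split it by derivative order, and then impose the null-Lagrangian condition on the part that is quadratic in derivatives.

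Charge neutrality under the global phase symmetry, together with bilinearity, forces every term to contain exactly one barred factor ($\bar\psi$ or $\partial_\mu\bar\psi$) and one unbarred factor ($\psi$ or $\partial_\nu\psi$). Translation invariance and the absence of external background fields make all coefficients constant Dirac-matrix-valued objects. The general ansatz is therefore
\[
  \Delta\mathcal L
  =
  \bar\psi M\psi
  +\bar\psi A^\mu\partial_\mu\psi
  +(\partial_\mu\bar\psi)B^\mu\psi
  +(\partial_\mu\bar\psi)E^{\mu\nu}(\partial_\nu\psi).
\]
The first three terms are at most linear in the derivative variables. Their contribution to the fiber Hessian \eqref{eq:fiber-hessian} vanishes identically, so modulo such terms only the $E^{\mu\nu}$ piece matters. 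There are no $\partial\psi\,\partial\psi$ or $\partial\bar\psi\,\partial\bar\psi$ blocks, since those would carry charge $\pm2$.

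Next I decompose $E^{\mu\nu}=E^{(\mu\nu)}+E^{[\mu\nu]}$ and impose the null-Lagrangian requirement on $\Delta\mathcal L$. The Euler--Lagrange expression with respect to $\bar\psi$, for instance, is
\[
  M\psi+A^\mu\partial_\mu\psi-\partial_\mu\bigl(B^\mu\psi+E^{\mu\nu}\partial_\nu\psi\bigr).
\]
Its second-order part is $-E^{(\mu\nu)}\partial_\mu\partial_\nu\psi$, because the antisymmetric part drops out when contracted with the symmetric second derivatives. Requiring the Euler--Lagrange expressions to vanish identically for arbitrary field configurations then gives:
\begin{itemize}
\item $E^{(\mu\nu)}=0$, from the second-order part;
\item $A^\mu=B^\mu$, from the first-order part;
\item $M=0$, from the undifferentiated part.
\end{itemize}
Hence the derivative-quadratic part of any null deformation is automatically $(\partial_\mu\bar\psi)E^{[\mu\nu]}(\partial_\nu\psi)$. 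The remainder $A^\mu\partial_\mu(\bar\psi\psi)$-type terms, i.e.\ $\partial_\mu(\bar\psi A^\mu\psi)$, are null Lagrangians with vanishing Hessian, which is exactly the ambiguity the statement quotients out.

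Finally I confirm that the representative is itself null. Using constancy of $E$ and the identity $\partial_\mu\partial_\nu\psi\,E^{[\mu\nu]}=0$, one has
\[
  (\partial_\mu\bar\psi)E^{[\mu\nu]}(\partial_\nu\psi)
  =\partial_\mu\bigl(\bar\psi E^{[\mu\nu]}\partial_\nu\psi\bigr),
\]
so it is a genuine total divergence $\partial_\mu F^\mu$.

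The main obstacle is rigor in the step "vanishing Euler--Lagrange expression implies the coefficient conditions." This relies on treating the jet coordinates as independent, so that the coefficients of $\partial_\mu\partial_\nu\psi$, $\partial_\mu\psi$ and $\psi$ can be separated, and on keeping careful track of Grassmann ordering if the spinors are taken to be anticommuting. Alternatively, one can cite the standard characterization of first-order null Lagrangians as divergences of currents depending at most on first derivatives, which are affine in the highest derivatives in the antisymmetric combination \cite{KrupkaMusilova1998}.
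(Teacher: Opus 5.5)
Your proposal is correct and follows essentially the same route as the paper: the same four-term constant-coefficient ansatz, the same Euler--Lagrange computation giving $A=0$, $B^\mu=C^\mu$ (your $M=0$, $A^\mu=B^\mu$) and $E^{(\mu\nu)}=0$, and the same identification of the remainder as the Hessian-trivial divergence $\partial_\mu(\bar\psi B^\mu\psi)$. The only cosmetic difference is that you exhibit the antisymmetric term as the divergence of the unsymmetrized current $\bar\psi E^{[\mu\nu]}\partial_\nu\psi$, whereas the paper averages this with $-(\partial_\nu\bar\psi)E^{[\mu\nu]}\psi$; both currents are valid.
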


\noindent\textit{Proof.}
Although only derivative-quadratic terms can affect the Hessian,
variational triviality is a condition on the whole first-order bilinear
density.  We therefore begin with the full charge-neutral bilinear expression
and discard the Hessian-trivial part only after imposing the null
condition.  The most general such expression is
\begin{equation}
  \Delta\mathcal L
  =
  \bar\psi A\psi
  +
  \bar\psi B^\mu\partial_\mu\psi
  +
  (\partial_\mu\bar\psi)C^\mu\psi
  +
  (\partial_\mu\bar\psi)E^{\mu\nu}\partial_\nu\psi ,
  \label{eq:DL-general}
\end{equation}
where \(A\), \(B^\mu\), \(C^\mu\), and \(E^{\mu\nu}\) are constant
Dirac-algebra-valued coefficients carrying the displayed Lorentz indices.
The first three terms are at most linear in the derivative variables, and
therefore have null Hessian.

Variational triviality means that the Euler--Lagrange derivatives of
\(\Delta\mathcal L\) vanish identically. Taking the derivative with respect
to \(\bar\psi\) thus gives
\begin{align*}
  0
  =
  \frac{\delta\,\Delta\mathcal L}{\delta\bar\psi}
  &=
  \frac{\partial\Delta\mathcal L}{\partial\bar\psi}
  -
  \partial_\mu
  \left(
    \frac{\partial\Delta\mathcal L}
         {\partial(\partial_\mu\bar\psi)}
  \right)
  \nonumber\\
  &=
  A\psi
  +
  (B^\mu-C^\mu)\partial_\mu\psi
  -
  E^{\mu\nu}\partial_\mu\partial_\nu\psi .
\end{align*}
Since \(\partial_\mu\partial_\nu\psi\) is symmetric in \(\mu,\nu\), only the
symmetric part \(E^{(\mu\nu)}\) contributes. Therefore
\begin{equation}
  A=0,
  \qquad
  B^\mu=C^\mu,
  \qquad
  E^{(\mu\nu)}=0 .
  \label{eq:null-conditions}
\end{equation}
The Euler--Lagrange derivative with respect to \(\psi\) gives the same
conditions.

Substituting \eqref{eq:null-conditions} into \eqref{eq:DL-general} yields
\begin{equation*}
  \Delta\mathcal L
  =
  \partial_\mu(\bar\psi B^\mu\psi)
  +
  (\partial_\mu\bar\psi)E^{[\mu\nu]}\partial_\nu\psi ,
\end{equation*}
where the first term is now seen to be a null divergence with identically vanishing Hessian.  It is therefore precisely the kind of term discarded in the statement of the proposition.

The second term, however, is quadratic in the derivative variables. Since \(E^{[\mu\nu]}\) is antisymmetric and constant, 
\begin{equation*}
  (\partial_\mu\bar\psi)E^{[\mu\nu]}\partial_\nu\psi
  =
  \partial_\mu
  \left[
    \frac12
    \left(
      \bar\psi E^{[\mu\nu]}\partial_\nu\psi
      -
      (\partial_\nu\bar\psi)E^{[\mu\nu]}\psi
    \right)
  \right].
\end{equation*}
Thus it leaves the free Euler--Lagrange equations unchanged but may still
have a nonzero Hessian.
\hfill\(\square\)

\begin{proposition}[Proper Lorentz classification]
\label{prop:lorentz-classification}
Under proper Poincar\'e covariance, absence of external backgrounds, and
reality of the action, every Hessian-relevant null deformation of the form
\eqref{eq:DL-antisymmetric-general} is
\begin{equation}
  \Delta\mathcal L
  =
  \frac{i}{3}\,
  (\partial_\mu\bar\psi)\,
  \mathcal E^{\mu\nu}\,
  (\partial_\nu\psi),
  \label{eq:DL-final}
\end{equation}
with
\begin{equation}
  \mathcal E^{\mu\nu}
  =
  \ell\,\sigma^{\mu\nu}
  +
  \ell_5\, ^\star\sigma^{\mu\nu},
  \qquad
  \ell,\ell_5\in\mathbb R,
  \label{eq:E-general-final}
\end{equation}
where
\begin{equation}
   ^\star\sigma^{\mu\nu}
  :=
  \frac12\,\varepsilon^{\mu\nu\rho\sigma}\sigma_{\rho\sigma}.
  \label{eq:dual-sigma}
\end{equation}
The constants \(\ell\) and \(\ell_5\) have dimension of length, or inverse
mass.
\end{proposition}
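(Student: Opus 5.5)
Starting from Proposition~\ref{prop:hessian-relevant-null-sector}, we already have the representative $(\partial_\mu\bar\psi)E^{[\mu\nu]}(\partial_\nu\psi)$ with $E^{[\mu\nu]}$ constant, antisymmetric and Dirac-matrix-valued. The plan is to expand $E^{[\mu\nu]}$ in the Clifford basis $\{\mathbf 1,\gamma^5,\gamma^\alpha,\gamma^\alpha\gamma^5,\sigma^{\alpha\beta}\}$:
\[
E^{[\mu\nu]}=a^{\mu\nu}\mathbf 1+b^{\mu\nu}\gamma^5+c^{\mu\nu}{}_\alpha\gamma^\alpha+d^{\mu\nu}{}_\alpha\gamma^\alpha\gamma^5+e^{\mu\nu}{}_{\alpha\beta}\sigma^{\alpha\beta}.
\]
Each coefficient is a constant numerical tensor antisymmetric in $[\mu\nu]$. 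Proper Poincar\'e covariance requires $\Delta\mathcal L$ to be a Lorentz scalar density under $SL(2,\mathbb C)$ acting on $\psi,\bar\psi$ and on indices. Translation invariance is automatic for constant coefficients. Since $\bar\psi\Gamma\psi$ transforms as the tensor indicated by $\Gamma$, each coefficient tensor must be a proper-Lorentz-invariant tensor. The absence of background fields means it can only be built from $\eta_{\mu\nu}$ and $\varepsilon_{\mu\nu\rho\sigma}$.

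The key step is the invariant-tensor count, organized by rank. The coefficients $a$ and $b$ are rank-2 and antisymmetric. The only invariant rank-2 tensor is $\eta^{\mu\nu}$, which is symmetric, so $a=b=0$. The coefficients $c$ and $d$ have odd total rank 3, and no invariant tensors of odd rank exist, so $c=d=0$. The coefficient $e^{\mu\nu}{}_{\alpha\beta}$ is rank 4. It is antisymmetric in $\mu\nu$ and, without loss of generality, in $\alpha\beta$. By the first fundamental theorem for $SO^+(1,3)$ it is a combination of $\eta\eta$ terms and $\varepsilon$. The antisymmetric projections leave exactly two structures, $\delta^{[\mu}_\alpha\delta^{\nu]}_\beta$ and $\varepsilon^{\mu\nu}{}_{\alpha\beta}$. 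These give
\[
E^{[\mu\nu]}=z_1\sigma^{\mu\nu}+z_2\,{}^\star\sigma^{\mu\nu},\qquad z_1,z_2\in\mathbb C .
\]
Parity is not imposed, so the $\varepsilon$ term is kept. The relation ${}^\star\sigma^{\mu\nu}=-i\sigma^{\mu\nu}\gamma^5$ stated in the conventions confirms that the $\sigma\gamma^5$ structure gives nothing new.

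Next we impose reality. We use $\gamma^0\sigma^{\mu\nu\dagger}\gamma^0=\sigma^{\mu\nu}$ and, since $\varepsilon$ is real, $\gamma^0({}^\star\sigma^{\mu\nu})^\dagger\gamma^0={}^\star\sigma^{\mu\nu}$. With these, the complex conjugate of $(\partial_\mu\bar\psi)\,\Sigma^{\mu\nu}(\partial_\nu\psi)$ is $(\partial_\nu\bar\psi)\,\Sigma^{\mu\nu}(\partial_\mu\psi)$. By antisymmetry this equals $-(\partial_\mu\bar\psi)\,\Sigma^{\mu\nu}(\partial_\nu\psi)$. Hence each bilinear is purely imaginary, and reality of $\Delta\mathcal L$ forces $z_1$ and $z_2$ to be purely imaginary. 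We can therefore write $z_1=\tfrac{i}{3}\ell$ and $z_2=\tfrac{i}{3}\ell_5$ with $\ell,\ell_5$ real. The factor $\tfrac{1}{3}$ is just a normalization matched to the Gasiorowicz term of \cite{struckmeier2024pauli}.

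Finally, dimensions are fixed by power counting. In four dimensions $[\psi]=3/2$ and $[\partial]=1$, so the bilinear $(\partial\bar\psi)(\partial\psi)$ has mass dimension 5. For $\Delta\mathcal L$ to have dimension 4, $\ell$ and $\ell_5$ must have dimension $-1$, i.e.\ length. The main obstacle is the rank-4 invariant-tensor step. One must make sure that the antisymmetrized $\eta\eta$ combinations reduce to a single structure and that the $\varepsilon$ structure is genuinely independent. I would settle this by checking that $\sigma^{\mu\nu}$ and $\sigma^{\mu\nu}\gamma^5$ are linearly independent in the Clifford algebra, and then using the dual relation to rewrite the second structure as ${}^\star\sigma^{\mu\nu}$.
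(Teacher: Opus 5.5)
Your proof is correct and follows essentially the paper's argument. You expand in the Clifford basis, kill the scalar, pseudoscalar, vector and axial sectors because there are no invariant antisymmetric rank-two or rank-three tensors, keep the identity and Hodge-dual structures on bivectors, and use the Dirac-Hermiticity of $\sigma^{\mu\nu}$ and ${}^\star\sigma^{\mu\nu}$ to make $\ell,\ell_5$ real. The only difference is that the paper imposes reality first, as the matrix condition $E^{[\mu\nu]}=-\gamma^0\bigl(E^{[\mu\nu]}\bigr)^\dagger\gamma^0$, and then classifies $\mathcal E^{\mu\nu}$, whereas you classify first and then fix the two surviving complex coefficients; the order makes no difference.
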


\noindent\textit{Proof.}
By Proposition~\ref{prop:hessian-relevant-null-sector}, only $E^{[\mu\nu]}$ remains to be classified. 

Reality of the action requires:
\begin{align*}
  \left[
    (\partial_\mu\bar\psi)E^{[\mu\nu]}(\partial_\nu\psi)
  \right]^\dagger
  &=
  (\partial_\nu\bar\psi)\,
  \gamma^0\bigl(E^{[\mu\nu]}\bigr)^\dagger\gamma^0\,
  (\partial_\mu\psi)
  \nonumber\\
  &=
  -(\partial_\mu\bar\psi)\,
  \gamma^0\bigl(E^{[\mu\nu]}\bigr)^\dagger\gamma^0\,
  (\partial_\nu\psi),
\end{align*}
where the last step relabels \(\mu\leftrightarrow\nu\) and uses the
antisymmetry of \(E^{[\mu\nu]}\). Hence reality requires
\begin{equation*}
  E^{[\mu\nu]}
  =
  -
  \gamma^0\bigl(E^{[\mu\nu]}\bigr)^\dagger\gamma^0 .
\end{equation*}
Thus \(E^{[\mu\nu]}\) is Dirac-anti-Hermitian. We therefore write
\begin{equation}
  E^{[\mu\nu]}
  =
  \frac{i}{3}\,\mathcal E^{\mu\nu},
  \qquad
  \gamma^0\bigl(\mathcal E^{\mu\nu}\bigr)^\dagger\gamma^0
  =
  \mathcal E^{\mu\nu}.
  \label{eq:E-herm}
\end{equation}
The factor \(i\) converts a Dirac-Hermitian coefficient into a
Dirac-anti-Hermitian one. The factor \(1/3\) is a normalization convention
chosen for later formulae.

We now classify the possible Dirac-Hermitian, bivector-valued coefficients
\(\mathcal E^{\mu\nu}\). Expanding on the Clifford basis gives
\begin{equation}
  \mathcal E^{\mu\nu}
  =
  a^{\mu\nu}\mathbf 1
  +
  b^{\mu\nu}i\gamma^5
  +
  c^{\mu\nu}{}_{\alpha}\gamma^\alpha
  +
  d^{\mu\nu}{}_{\alpha}\gamma^\alpha\gamma^5
  +
  \frac12\,e^{\mu\nu}{}_{\alpha\beta}\sigma^{\alpha\beta}.
  \label{eq:E-clifford}
\end{equation}
The coefficient tensors are constant and may be built only from
\(\eta_{\mu\nu}\), \(\varepsilon_{\mu\nu\rho\sigma}\), and Kronecker deltas.

There is no nonzero Lorentz-invariant antisymmetric rank-two tensor. Hence
the scalar and pseudoscalar terms with \(a^{\mu\nu}\) and \(b^{\mu\nu}\) vanish.
Likewise, there is no invariant rank-three tensor with the index structure
needed for \(c^{\mu\nu}{}_\alpha\) or \(d^{\mu\nu}{}_\alpha\), so the vector
and axial-vector sectors vanish as well.

Thus only the tensor sector survives. Now, Lorentz covariance asks
for an invariant linear map from bivectors to bivectors. In four dimensions,
with orientation fixed, the only such maps are the identity and the Hodge
dual, therefore
\begin{equation*}
  e^{\mu\nu}{}_{\alpha\beta}
  =
  2\ell\,\delta^{[\mu}{}_{\alpha}\delta^{\nu]}{}_{\beta}
  +
  \ell_5\,\varepsilon^{\mu\nu}{}_{\alpha\beta},
\end{equation*}
and hence
\begin{equation}
  \mathcal E^{\mu\nu}
  =
  \ell\,\sigma^{\mu\nu}
  +
  \ell_5\, ^\star\sigma^{\mu\nu}.
\end{equation}
Both \(\sigma^{\mu\nu}\) and \( ^\star\sigma^{\mu\nu}\) are
Dirac-Hermitian, so \eqref{eq:E-herm} implies
\[
  \ell,\ell_5\in\mathbb R .
\]
The dimensions of \(\ell\) and \(\ell_5\) are those of length, equivalently
inverse mass.
\hfill\(\square\)

A useful current representative for the null deformation
\eqref{eq:DL-final} is the local current
\begin{equation*}
  F^\mu
  =
  \frac{i}{6}
  \left[
    \bar\psi\,\mathcal E^{\mu\nu}\partial_\nu\psi
    -
    (\partial_\nu\bar\psi)\,\mathcal E^{\mu\nu}\psi
  \right].
\end{equation*}
Using the constancy and antisymmetry of \(\mathcal E^{\mu\nu}\), one obtains
\begin{equation*}
  \partial_\mu F^\mu
  =
  \frac{i}{3}\,
  (\partial_\mu\bar\psi)\,
  \mathcal E^{\mu\nu}\,
  (\partial_\nu\psi)
  =
  \Delta\mathcal L .
\end{equation*}

\begin{proposition}[Regularizing null deformation]
\label{prop:regularizing-null}
The minimal first-order, charge-neutral, real, proper Poincar\'e-covariant
null deformations which regularize the covariant Dirac Legendre map are
\begin{equation}
  \mathcal L_{\mathrm{reg}}
  =
  \mathcal L_D
  +
  \frac{i}{3}\,
  (\partial_\mu\bar\psi)
  \left(
    \ell\,\sigma^{\mu\nu}
    +
    \ell_5\, ^\star\sigma^{\mu\nu}
  \right)
  (\partial_\nu\psi),
  \label{eq:L-reg-final}
\end{equation}
with
\begin{equation}
  \ell,\ell_5\in\mathbb R,
  \qquad
  (\ell,\ell_5)\neq(0,0).
  \label{eq:ell-condition-final}
\end{equation}
\end{proposition}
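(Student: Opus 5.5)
By Propositions~\ref{prop:hessian-relevant-null-sector} and \ref{prop:lorentz-classification}, every admissible Hessian-relevant null deformation is already of the form \eqref{eq:DL-final} with real $\ell,\ell_5$. The remaining task is to determine for which $(\ell,\ell_5)$ the fiber Hessian \eqref{eq:fiber-hessian} of $\mathcal L_{\mathrm{reg}}$ is nondegenerate. Minimality is then taken care of by the preceding discussion: the deformation is bilinear and quadratic in the derivatives, which is the lowest order that can affect the Hessian, while quartic and higher terms vanish at the free vacuum.

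\emph{Step 1: structure of the Hessian.} Since $\mathcal L_D$ is linear in the derivative variables, only $\Delta\mathcal L$ contributes to $W$. With $\psi$ and $\bar\psi$ treated as independent, the blocks $\partial^2/\partial\psi_\mu\partial\psi_\nu$ and $\partial^2/\partial\bar\psi_\mu\partial\bar\psi_\nu$ vanish. The only nonzero block is the mixed one,
\[
  K^{\mu\nu}:=\frac{\partial^2\mathcal L_{\mathrm{reg}}}{\partial\bar\psi_\mu\,\partial\psi_\nu}=\frac{i}{3}\,\mathcal E^{\mu\nu},
\]
viewed as a $16\times16$ matrix with combined index (spinor, spacetime). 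Ordered as $(\psi_\mu,\bar\psi_\mu)$, the full Hessian is block off-diagonal, with $K$ and its transpose in the off-diagonal positions. Hence $\det W=\pm(\det K)^2$, and regularity is equivalent to the injectivity of the linear map $v_\nu\mapsto\mathcal E^{\mu\nu}v_\nu$ on spinor-valued covectors.

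\emph{Step 2: factorization.} Using ${}^\star\sigma^{\mu\nu}=-i\sigma^{\mu\nu}\gamma^5$, one writes
\[
  \mathcal E^{\mu\nu}=\sigma^{\mu\nu}\,(\ell-i\ell_5\gamma^5).
\]
The factor $\ell-i\ell_5\gamma^5$ commutes with $\sigma^{\mu\nu}$ and carries no spacetime index. Its eigenvalues are $\ell\mp i\ell_5$. For real $\ell,\ell_5$ these vanish only if $\ell=\ell_5=0$, so the factor is invertible exactly when $(\ell,\ell_5)\neq(0,0)$. If $\ell=\ell_5=0$, then $K=0$ and the Lagrangian remains singular, which shows that condition \eqref{eq:ell-condition-final} is necessary.

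\emph{Step 3: invertibility of the $\sigma$ part.} This is the key algebraic step, and it should be short. From $\sigma^{\mu\nu}=i(\gamma^\mu\gamma^\nu-\eta^{\mu\nu})$ we get
\[
  \sigma^{\mu\nu}v_\nu=i\bigl(\gamma^\mu\gamma^\nu v_\nu-v^\mu\bigr).
\]
Suppose this vanishes. Then $v^\mu=\gamma^\mu(\gamma\cdot v)$. Contracting with $\gamma_\mu$ and using $\gamma_\mu\gamma^\mu=4$ gives $\gamma\cdot v=4\,\gamma\cdot v$, so $\gamma\cdot v=0$. Substituting back yields $v^\mu=0$. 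Thus the map $v_\nu\mapsto\sigma^{\mu\nu}v_\nu$ is injective, and combined with Step 2 we obtain $\det K\neq0$ precisely when $\ell^2+\ell_5^2\neq0$.

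The main point requiring care is Step 1: justifying that the Grassmann/independent-variable treatment of $(\psi,\bar\psi)$ really reduces regularity to $\det K\neq0$. I would handle this by treating spinor components as classical coordinates, as the paper stipulates, and reading off the block structure directly. By contrast, Step 3 is simple once $\sigma^{\mu\nu}$ is rewritten through $\gamma^\mu\gamma^\nu$.
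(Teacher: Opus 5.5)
Your proof is correct. Its skeleton matches the paper's: reduce regularity to invertibility of the mixed block $\frac{i}{3}\mathcal E^{\mu\nu}$, factor $\mathcal E^{\mu\nu}=\sigma^{\mu\nu}(\ell-i\ell_5\gamma^5)$, and show both factors are invertible.

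The difference is how invertibility is shown. The paper writes down an explicit left inverse, $U_{\mu\nu}=(\ell^2+\ell_5^2)^{-1}\tau_{\mu\nu}(\ell+i\ell_5\gamma^5)$. Here $\tau_{\mu\nu}=\frac{i}{6}(\gamma_\mu\gamma_\nu+3\gamma_\nu\gamma_\mu)$ satisfies $\tau_{\mu\alpha}\sigma^{\alpha\nu}=\delta_\mu{}^\nu\mathbf 1$, an identity imported from earlier work. The chiral factor is handled by $(\ell+i\ell_5\gamma^5)(\ell-i\ell_5\gamma^5)=\ell^2+\ell_5^2$. You instead use the spectrum of $\gamma^5$ for the chiral factor and a kernel argument for $\sigma^{\mu\nu}$.

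Your route is self-contained. It also makes explicit the relation $\det W=\pm(\det K)^2$ between the full Hessian and the mixed block, which the paper only asserts. On its own, however, it gives no formula for the inverse. The paper needs that formula right afterwards to solve for $\partial_\mu\psi$ and $\partial_\mu\bar\psi$ in terms of the polymomenta and to write the De Donder--Weyl Hamiltonian.

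If you want that formula, run your argument on the inhomogeneous equation $\sigma^{\mu\nu}v_\nu=w^\mu$. Contracting with $\gamma_\mu$ gives $\gamma\cdot v=-\frac{i}{3}\gamma_\nu w^\nu$, hence $v_\mu=\frac{i}{3}(3\eta_{\mu\nu}-\gamma_\mu\gamma_\nu)w^\nu$. The operator $\frac{i}{3}(3\eta_{\mu\nu}-\gamma_\mu\gamma_\nu)$ is exactly the paper's $\tau_{\mu\nu}$. So your kernel argument both proves invertibility and derives the paper's inverse instead of quoting it.
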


\noindent\textit{Proof.}
By Proposition~\ref{prop:lorentz-classification}, it only remains to impose regularity.

Since \(\mathcal L_D\) is linear in the field gradients, only the added null term contributes to the Hessian. Thus the only nonvanishing second derivatives of \(\mathcal L_{\mathrm{reg}}\) with respect to the field gradients are the mixed ones,
\begin{equation*}
\frac{\partial^2\mathcal L_{\mathrm{reg}}}
{\partial(\partial_\mu\bar\psi)\,\partial(\partial_\nu\psi)}
=
\frac{i}{3}\,\mathcal E^{\mu\nu},
\end{equation*}
while
\begin{equation*}
\frac{\partial^2\mathcal L_{\mathrm{reg}}}
{\partial(\partial_\mu\psi)\,\partial(\partial_\nu\psi)}
=
\frac{\partial^2\mathcal L_{\mathrm{reg}}}
{\partial(\partial_\mu\bar\psi)\,\partial(\partial_\nu\bar\psi)}
=
0.
\end{equation*}
Hence the Hessian is nondegenerate if and only if the mixed coefficient \(\mathcal E^{\mu\nu}\) is invertible.

To show this, use
\begin{equation}
^\star\sigma^{\mu\nu}=-i\,\sigma^{\mu\nu}\gamma^5,
\qquad
[\sigma^{\mu\nu},\gamma^5]=0,
\label{eq:dual5}
\end{equation}
so that
\begin{equation}
\mathcal E^{\mu\nu}
=
(\ell-i\ell_5\gamma^5)\sigma^{\mu\nu}.
\label{eq:E-factorized-proof}
\end{equation}
Now introduce \cite{struckmeier05}
\begin{equation}
\tau_{\mu\nu}
:=
\frac{i}{6}\Bigl(\gamma_\mu\gamma_\nu+3\gamma_\nu\gamma_\mu\Bigr),
\qquad
\tau_{\mu\alpha}\sigma^{\alpha\nu}
=
\delta_\mu{}^\nu\,\mathbf 1,
\label{eq:tau-def-proof}
\end{equation}
and define
\begin{equation}
U_{\mu\nu}
:=
\frac{1}{\ell^2+\ell_5^2}\,\tau_{\mu\nu}\,(\ell+i\ell_5\gamma^5).
\label{eq:U-def-proof}
\end{equation}
Using \eqref{eq:E-factorized-proof}, \eqref{eq:tau-def-proof}, and the commutativity of \(\gamma^5\) with \(\sigma^{\mu\nu}\), one finds
\begin{align}
U_{\mu\alpha}\mathcal E^{\alpha\nu}
&=
\frac{1}{\ell^2+\ell_5^2}\,
\tau_{\mu\alpha}(\ell+i\ell_5\gamma^5)(\ell-i\ell_5\gamma^5)\sigma^{\alpha\nu}
\nonumber\\
&=
\tau_{\mu\alpha}\sigma^{\alpha\nu}
=
\delta_\mu{}^\nu\,\mathbf 1.
\label{eq:left-inverse-proof}
\end{align}
Therefore \(U_{\mu\nu}\) is a left inverse of \(\mathcal E^{\mu\nu}\), and since \(\mathcal E^{\mu\nu}{}_A{}^B\) is a square \(16\times16\) matrix, it is invertible whenever
\begin{equation*}
\ell^2+\ell_5^2\neq 0.
\end{equation*}
Therefore regularity holds if and only if
\begin{equation*}
(\ell,\ell_5)\neq(0,0).
\end{equation*}
\hfill $\square$

Using
\begin{equation*}
\gamma_\alpha\sigma^{\alpha\beta}=3i\,\gamma^\beta
\quad \text{and} \quad
\sigma^{\alpha\beta}\gamma_\beta=3i\,\gamma^\alpha,
\end{equation*}
with $\lambda^2:=\ell^2+\ell_5^2$, the regularized Lagrangian \eqref{eq:L-reg-final} may be rewritten in the equivalent symmetric form
\begin{widetext}
\begin{equation}
\mathcal L_{\mathrm{reg}}
=
\frac{i}{3}
\left(
\partial_\mu\bar\psi
-\frac{i}{2\lambda^2}\,\bar\psi\,\gamma_\mu(\ell+i\ell_5\gamma^5)
\right)
\mathcal E^{\mu\nu}
\left(
\partial_\nu\psi
+\frac{i}{2\lambda^2}\,(\ell+i\ell_5\gamma^5)\gamma_\nu\psi
\right)
-\left(m-\frac{\ell}{\lambda^2}\right)\bar\psi\psi
-\frac{\ell_5}{\lambda^2}\,\bar\psi\,i\gamma^5\psi .
\label{eq:L-reg-symmetric}
\end{equation}
\end{widetext}

If parity (or CP--symmetry) is imposed in addition, the pseudotensor contribution is excluded and
\begin{equation*}
\ell_5=0 ,
\end{equation*}
one then recovers the parity-even regularization as the unique representative within our demands (up to an overall normalization) \cite{struckmeier2024pauli}.

\subsection{Corresponding covariant Hamiltonian}
\label{subsec:dirac-hamiltonian}

To show that the regularization procedure indeed yields a De Donder--Weyl Hamiltonian, consider the regularized Lagrangian \eqref{eq:L-reg-final}, with \(\mathcal E^{\mu\nu}\) as in Eq.~\eqref{eq:DL-final}. Treating \(\psi\) and \(\bar\psi\) as independent fields, the corresponding polymomenta are defined by
\begin{equation*}
\pi^\mu
:=
\frac{\partial\mathcal L_{\mathrm{reg}}}{\partial(\partial_\mu\bar\psi)},
\quad
\bar\pi^\mu
:=
\frac{\partial\mathcal L_{\mathrm{reg}}}{\partial(\partial_\mu\psi)}.
\end{equation*}
A direct calculation gives
\begin{equation}
\pi^\mu
=
-\frac{i}{2}\gamma^\mu\psi
+\frac{i}{3}\,\mathcal E^{\mu\nu}\partial_\nu\psi,
\quad
\bar\pi^\mu
=
\frac{i}{2}\bar\psi\gamma^\mu
+\frac{i}{3}\,(\partial_\nu\bar\psi)\,\mathcal E^{\nu\mu}.
\label{eq:pi-explicit}
\end{equation}

Using the inverse \eqref{eq:U-def-proof}, the field derivatives are recovered as
\begin{equation}
\partial_\mu\psi
=
\frac{3}{i}\,
U_{\mu\nu}
\left(
\pi^\nu+\frac{i}{2}\gamma^\nu\psi
\right),
\quad
\partial_\mu\bar\psi
=
\frac{3}{i}
\left(
\bar\pi^\nu-\frac{i}{2}\bar\psi\gamma^\nu
\right)
U_{\nu\mu}.
\label{eq:velocity-inversion}
\end{equation}

The De Donder--Weyl Hamiltonian density is
\begin{equation}
\mathcal H_{\mathrm{reg}}
=
\bar\pi^\mu\,\partial_\mu\psi
+
(\partial_\mu\bar\psi)\,\pi^\mu
-
\mathcal L_{\mathrm{reg}}.
\label{eq:H-def}
\end{equation}
Substituting \eqref{eq:pi-explicit} into \eqref{eq:H-def} and using \eqref{eq:velocity-inversion}, this becomes
\begin{equation}
\mathcal H_{\mathrm{reg}}
=
\frac{3}{i}
\left(
\bar\pi^\mu-\frac{i}{2}\bar\psi\gamma^\mu
\right)
U_{\mu\nu}
\left(
\pi^\nu+\frac{i}{2}\gamma^\nu\psi
\right)
+
m\,\bar\psi\psi ,
\label{eq:H-reg-expanded}
\end{equation}
thus the regularization procedure indeed yields a covariant Hamiltonian density. Setting $\ell_5=0$, (\ref{eq:H-reg-expanded}) reduces to the free Dirac covariant Hamiltonians of \cite{struckmeier2024extended,vasak2023covariant}. 

In the terminology of Lepagean regularization, the result may be viewed as a dedonderized
representative of the free Dirac variational class
\cite{krupkova2001legendre}: the added derivative-quadratic null term leaves
the free Euler--Lagrange equations unchanged, but changes the ordinary
Hamilton--De Donder Legendre map so that the derivative variables can be
recovered from the polymomenta. In the following sections we make the additional choice of taking
this regular representative itself as the matter Lagrangian before applying
minimal gauging, after which the null term is no longer variationally
trivial.

\section{\texorpdfstring{$U(1)$}{U(1)} Gauging and Emergent Dipole Moments}
\label{subsec:qed}

We now apply minimal $U(1)$ gauging to the Legendre-regularizing
representative
\begin{equation*}
  \Delta\mathcal L
  =
  \frac{i}{3}
  (\partial_\mu\bar\psi)\,
  \mathcal E^{\mu\nu}\,
  (\partial_\nu\psi),
  \qquad
  \mathcal E^{\mu\nu}
  =
  \ell\,\sigma^{\mu\nu}
  +
  \ell_5\, ^\star\sigma^{\mu\nu}.
\end{equation*}
The important point is that the regularity condition has selected a particular
derivative-quadratic representative of the free null class. Minimal coupling
is therefore applied to this representative.

We use the convention
\begin{equation*}
  \psi\mapsto e^{i\Lambda(x)}\psi,
  \quad
  \bar\psi\mapsto\bar\psi e^{-i\Lambda(x)},
  \quad
  A_\mu\mapsto A_\mu-\frac{1}{q}\partial_\mu\Lambda ,
\end{equation*}
with covariant derivatives
\begin{equation*}
  \nabla_\mu\psi
  :=
  \partial_\mu\psi+i q A_\mu\psi,
  \qquad
  \bar\nabla_\mu\bar\psi
  :=
  \partial_\mu\bar\psi-iq\bar\psi A_\mu .
\end{equation*}
Thus
\begin{equation*}
  [\nabla_\mu,\nabla_\nu]\psi
  =
  iqF_{\mu\nu}\psi,
  \qquad
  F_{\mu\nu}:=\partial_\mu A_\nu-\partial_\nu A_\mu .
\end{equation*}

The gauged regularized Dirac--Maxwell density is then
\begin{align}
  \mathcal L_{U(1)}
  &=
  \frac{i}{2}
  \left(
    \bar\psi\gamma^\mu\nabla_\mu\psi
    -
    (\bar\nabla_\mu\bar\psi)\gamma^\mu\psi
  \right)
  -
  m\bar\psi\psi
  -
  \frac14 F_{\mu\nu}F^{\mu\nu}
  \nonumber\\
  &\
  +
  \frac{i}{3}
  (\bar\nabla_\mu\bar\psi)\,
  \mathcal E^{\mu\nu}\,
  (\nabla_\nu\psi).
  \label{eq:L-reg-U1}
\end{align}
The spinorial Legendre sector remains regular, because the highest-derivative
part is still controlled by the same mixed coefficient
\(\mathcal E^{\mu\nu}\). Hence the condition remains
\begin{equation*}
  \ell^2+\ell_5^2\neq0 .
\end{equation*}

In the free theory, the last term in \eqref{eq:L-reg-U1} reduces to a total
divergence. After gauging, however, the covariant derivatives no longer
commute. Since \(\mathcal E^{\mu\nu}\) is antisymmetric and gauge neutral,
one finds
\begin{align}
  \frac{i}{3}
  (\bar\nabla_\mu\bar\psi)\,
  \mathcal E^{\mu\nu}\,
  (\nabla_\nu\psi)
  &=
  \frac{i}{3}\,
  \partial_\mu
  \left(
    \bar\psi\,\mathcal E^{\mu\nu}\nabla_\nu\psi
  \right)
  -
  \frac{i}{3}\,
  \bar\psi\,\mathcal E^{\mu\nu}\nabla_\mu\nabla_\nu\psi
  \nonumber\\
  &=
  \frac{i}{3}\,
  \partial_\mu
  \left(
    \bar\psi\,\mathcal E^{\mu\nu}\nabla_\nu\psi
  \right)
  +
  \frac{q}{6}\,
  F_{\mu\nu}\,
  \bar\psi\,\mathcal E^{\mu\nu}\psi ,
  \label{eq:gauged-null-descendant}
\end{align}
where the second term on the right-hand side now represents a genuine interaction. The Euler--Lagrange equations for the spinor fields are therefore
\begin{subequations}
\label{eq:Dirac-U1-general}
\begin{align}
  i\gamma^\mu\nabla_\mu\psi
  -
  m\psi
  +
  \frac{q}{6}
  F_{\mu\nu}
  \left(
    \ell\,\sigma^{\mu\nu}
    +
    \ell_5\, ^\star\sigma^{\mu\nu}
  \right)
  \psi
  &=0,
  \label{eq:Dirac-U1-general-psi}
  \\
  i(\bar\nabla_\mu\bar\psi)\gamma^\mu
  +
  m\bar\psi
  -
  \frac{q}{6}
  F_{\mu\nu}
  \bar\psi
  \left(
    \ell\,\sigma^{\mu\nu}
    +
    \ell_5\, ^\star\sigma^{\mu\nu}
  \right)
  &=0 .
  \label{eq:Dirac-U1-general-barpsi}
\end{align}
\end{subequations}
where the parameter \(\ell\) controls a Pauli-type magnetic dipole operator, whereas the parameter \(\ell_5\), a parity-odd electric-dipole-type operator.  In usual dipole notation, with the convention
\[
\mathcal L_{\rm dip}
=
\frac{q a_f}{4m_f}
F_{\mu\nu}\bar\psi\sigma^{\mu\nu}\psi
-\frac{i}{2}d_f
F_{\mu\nu}\bar\psi\sigma^{\mu\nu}\gamma^5\psi ,
\]
we may identify, up to the sign convention used for \(d_f\) and \(\gamma^5\), 
\[
a_f=\frac{2m_f}{3}\ell,
\qquad
d_f=\frac{q}{3}\ell_5.
\]

For a theory with several Dirac fields, the regularizing parameters should
be attached to the free-field representative of each species,
\[
        \ell\to\ell_f,\qquad
        \ell_5\to\ell_{5f}.
\]
These parameters are introduced before gauging: they define the
Legendre-regular representative of the free Dirac Lagrangian for the
field \(f\), not a coefficient associated only with electromagnetism.
Once that representative is chosen as the matter Lagrangian, subsequent
couplings of the same fermion inherit the same pair
\((\ell_f,\ell_{5f})\). Thus an electromagnetic dipole constraint does
not merely bound an electromagnetic parameter, it bounds the same
regularization lengths that also enter other (\textit{e.g.} weak, color, or geometric)
couplings of that fermion whenever those couplings are obtained from the
same gauged representative.

The electromagnetic descendant gives immediate bounds on these lengths.
Writing the electric charge as \(q_f=Q_f e\), the identifications above
give
\[
        |\ell_f|
        \lesssim
        {3\over 2m_f}\,\delta a_f,
        \qquad
        |\ell_{5f}|
        \lesssim
        {3\over |Q_f|}\,{|d_f|\over e}.
\]
Here \(\delta a_f\) denotes the allowed contribution to the anomalous
magnetic moment attributed to the regularizing Pauli term. A full
new-physics bound would require specifying the Standard-Model prediction
and input constants, but an experimental sensitivity estimate is obtained
directly from the measured uncertainty in \(a_f\).

For the parity-even parameter, precision magnetic-moment measurements give
a direct sensitivity estimate. Using the the experimental uncertainty
in the electron anomaly,
\[
        \delta a_e^{\rm exp}=1.3\times10^{-13},
\]
as the allowed size of an additional Pauli contribution, one obtains
\[
        |\ell_e|
        \lesssim
        {3\over 2m_e}\,
        1.3\times10^{-13}
        \simeq
        3.8\times10^{-10}\,{\rm GeV}^{-1}
\]
or equivalently
\[
        |\ell_e|^{-1}
        \gtrsim
        2.6\times10^{9}\,{\rm GeV}.
\]
The Standard-Model comparison of the electron magnetic moment is tied in
particular to the fine-structure constant input, so this number should be
read as an experimental sensitivity to \(\ell_e\), not as a global fit
bound \cite{Fan:2023ElectronMagneticMoment}.

The parity-odd parameter is constrained more directly by electric-dipole
searches. The current electron EDM bound,
\[
        |d_e|<4.1\times10^{-30}\,e\,{\rm cm}
\]
at \(90\%\) confidence level \cite{Roussy:2023eEDM}, gives, for
\(Q_e=-1\),
\[
        |\ell_{5e}|
        \lesssim
        6.2\times10^{-16}\,{\rm GeV}^{-1},
\]
or
\[
        |\ell_{5e}|^{-1}
        \gtrsim
        1.6\times10^{15}\,{\rm GeV},
\]
assuming no cancellation with other CP-odd contributions.

Using
\eqref{eq:dual5}
the gauge-induced dipole term can also be written as
\begin{equation*}
  \frac{q}{6}
  F_{\mu\nu}\bar\psi\mathcal E^{\mu\nu}\psi
  =
  \frac{q}{6}
  F_{\mu\nu}\,
  \bar\psi\sigma^{\mu\nu}
  \left(
    \ell-i\ell_5\gamma^5
  \right)
  \psi .
\end{equation*}

Equivalently, the \(\ell_5\) contribution may be absorbed into the Hodge dual
of the electromagnetic field strength. Defining
\begin{equation*}
  ({}^\star F)_{\mu\nu}
  :=
  \frac12\varepsilon_{\mu\nu\rho\sigma}F^{\rho\sigma},
\end{equation*}
we see that
\begin{equation}
  \frac{q}{6}
  F_{\mu\nu}\bar\psi\mathcal E^{\mu\nu}\psi
  =
  \frac{q}{6}
  \bar\psi\sigma^{\mu\nu}
  \left(
    \ell F_{\mu\nu}
    +
    \ell_5({}^\star F)_{\mu\nu}
  \right)
  \psi .
  \label{eq:pauli-dual-F-form}
\end{equation}

Varying with respect to the gauge potential gives the Maxwell equation
\begin{equation}
  \partial_\mu F^{\mu\nu}
  =
  q\bar\psi\gamma^\nu\psi
  +
  \frac{q}{3}
  \partial_\mu
  \left[
    \bar\psi
    \left(
      \ell\,\sigma^{\mu\nu}
      +
      \ell_5\, ^\star\sigma^{\mu\nu}
    \right)
    \psi
  \right].
  \label{eq:maxwell-general-regularized}
\end{equation}
It is useful to decompose the electromagnetic source into the minimal Dirac
current
\begin{equation*}
J_{\rm cov}^\nu
:=
q\bar\psi\gamma^\nu\psi
\end{equation*}
and the magnetic- and electric-dipole-type currents
\begin{equation*}
J_{\rm M}^\nu
:=
\partial_\mu\mathcal M_{\rm M}^{\mu\nu},
\qquad
J_{\rm E}^\nu
:=
\partial_\mu\mathcal M_{\rm E}^{\mu\nu},
\end{equation*}
where
\begin{equation*}
\mathcal M_{\rm M}^{\mu\nu}
:=
\frac{q\ell}{3}\bar\psi\sigma^{\mu\nu}\psi,
\qquad
\mathcal M_{\rm E}^{\mu\nu}
:=
\frac{q\ell_5}{3}\bar\psi\,{}^\star\sigma^{\mu\nu}\psi .
\end{equation*}
Thus
\begin{equation}
J_{\rm dip}^\nu
:=
J_{\rm M}^\nu+J_{\rm E}^\nu
=
\partial_\mu\mathcal M^{\mu\nu},
\quad
\mathcal M^{\mu\nu}
:=
\mathcal M_{\rm M}^{\mu\nu}
+
\mathcal M_{\rm E}^{\mu\nu},
\label{eq:dipole-current}
\end{equation}
and \eqref{eq:maxwell-general-regularized} becomes
\begin{equation}
\partial_\mu F^{\mu\nu}
=
J_{\rm tot}^\nu,
\qquad
J_{\rm tot}^\nu
:=
J_{\rm cov}^\nu+J_{\rm dip}^\nu .
\label{eq:maxwell-total-current}
\end{equation}

The two dipole currents are identically conserved. Indeed, the antisymmetry
of the superpotentials \(\mathcal M_{\rm M}^{\mu\nu}\) and
\(\mathcal M_{\rm E}^{\mu\nu}\) gives
\begin{equation*}
\partial_\nu J_{\rm M}^\nu
=
\partial_\nu\partial_\mu\mathcal M_{\rm M}^{\mu\nu}
\equiv0,
\quad
\partial_\nu J_{\rm E}^\nu
=
\partial_\nu\partial_\mu\mathcal M_{\rm E}^{\mu\nu}
\equiv0 ,
\end{equation*}
therefore,
\begin{equation*}
\partial_\nu J_{\rm dip}^\nu
=
\partial_\nu J_{\rm M}^\nu
+
\partial_\nu J_{\rm E}^\nu
=
\partial_\nu\partial_\mu\mathcal M^{\mu\nu}
\equiv0 .
\end{equation*}
These are off-shell identities, unlike the minimal Dirac current, which is
conserved only on shell, as one obtains
\begin{equation*}
\partial_\nu J_{\rm cov}^\nu=0,
\end{equation*}
upon use of the spinor equations of motion.

The superpotential form also implies
\begin{equation*}
Q_{\rm M,E}
:=
\int_\Sigma d^3x\,J_{\rm M,E}^0
=
\int_{\partial\Sigma}dS_i\,
\mathcal M_{\rm M,E}^{i0}.
\end{equation*}
The total dipole charge is therefore
\begin{align*}
Q_{\rm dip}
&:=
Q_{\rm M}+Q_{\rm E}
=
\int_\Sigma d^3x\,J_{\rm dip}^0
\nonumber\\
&=
\int_{\partial\Sigma}dS_i\,
\mathcal M^{i0}.
\end{align*}
For localized fields these boundary terms vanish, so that
\begin{equation*}
Q_{\rm M}=Q_{\rm E}=Q_{\rm dip}=0.
\end{equation*}
The dipole currents therefore modify the local polarization--magnetization
structure of the source without changing its total \(U(1)\) charge.
This superpotential structure is the Abelian analogue of the
Gordon polarization-current decomposition of the Dirac--Yang--Mills
Noether current, where the polarization current is the exterior
covariant derivative of a moment two-form
\cite{Gordon:1928,Hehl:1997currents}. In the present construction,
however, the dipole current is obtained by gauging the
Legendre-regular null representative, rather than by an on-shell
Gordon rewrite of the minimally coupled Dirac current.

If parity or CP invariance is imposed, then \(\ell_5=0\), and
\eqref{eq:pauli-dual-F-form} reduces to the anomalous Pauli coupling term
found in \cite{struckmeier2024pauli}.

\section{Gauging to Metric-Affine Spacetime}
\label{sec:spacetime-metricaffine}

We now replace the fixed Minkowski background by a dynamical
metric-affine spacetime geometry. In this section, Greek indices refer to
curved spacetime components, while the flat indices of the preceding
Minkowski discussion are carried by local Lorentz-frame Latin indices.
The metric tensor is written in terms of tetrads as
\begin{equation*}
g_{\mu\nu}(x)
=
\eta_{ab}\,e^a{}_\mu(x)e^b{}_\nu(x),
\quad
e:=\det(e^a{}_\mu)=\sqrt{-g}.
\end{equation*}
The curved Dirac matrices are
\begin{equation*}
\gamma^\mu(x):=e_a{}^\mu(x)\gamma^a,
\qquad
\sigma^{\mu\nu}(x)
:=
\frac{i}{2}\big[\gamma^\mu(x),\gamma^\nu(x)\big].
\end{equation*}
We denote the corresponding Hodge-dual Clifford bivector by
\begin{equation}
{}^\star\sigma^{\mu\nu}(x)
:=
\frac12\,\varepsilon^{\mu\nu\rho\sigma}(x)\sigma_{\rho\sigma}(x)
=
-i\sigma^{\mu\nu}(x)\gamma^5 ,
\label{eq:dual-sigma-star}
\end{equation}
with the sign fixed by the conventions of Sec.~\ref{sec:regular-dirac}.
The two-parameter regularizing bivector is
\begin{equation}
\mathcal E^{\mu\nu}(x)
:=
\ell\,\sigma^{\mu\nu}(x)
+
\ell_5\,{}^\star\sigma^{\mu\nu}(x).
\label{eq:E-metricaffine}
\end{equation}

The affine connection is denoted by \(\Gamma^\rho{}_{\mu\nu}\), with no
symmetry or metric-compatibility condition imposed. We use
\begin{equation*}
S^\lambda{}_{\mu\nu}
:=
\Gamma^\lambda{}_{[\mu\nu]},
\qquad
Q_{\lambda\mu\nu}
:=
-\nabla_\lambda g_{\mu\nu}
=
-g_{\mu\nu;\lambda}
\end{equation*}
for the Cartan torsion and nonmetricity tensors respectively.

The tetrad and affine connections are related by the tetrad postulate
\begin{equation}
\partial_\mu e^a{}_{\nu}
-
\Gamma^\rho{}_{\nu\mu}e^a{}_\rho
+
\omega_\mu{}^a{}_b e^b{}_\nu
=
0 .
\label{eq:tetrad-postulate-RC}
\end{equation}
At this stage no symmetry is imposed on the frame indices of
\(\omega_\mu{}^a{}_b\).

Spinor derivatives are defined by
\begin{equation*}
D_\mu\psi
:=
\partial_\mu\psi+\Omega_\mu\psi,
\qquad
\bar D_\mu\bar\psi
:=
\partial_\mu\bar\psi-\bar\psi\Omega_\mu .
\end{equation*}
Where we use the standard Lorentz-spinor realization,
\begin{equation}
\Omega_\mu
=
-\frac{i}{4}\sigma^{ab}\omega_{ab\mu},
\label{eq:spinor-connection-metricaffine}
\end{equation}
in which nonmetricity enters through
the tetrad, the density factors, and the covariant differentiation of
Clifford-valued tensors, while the spinor connection itself contains only
the Lorentz part \(\omega_{[ab]\mu}\). General metric-affine gauge theory, including the
role of hypermomentum and the world-spinor problem, is reviewed in
\cite{Hehl:1995ue}.

Minimal spacetime gauging of the two-parameter Legendre-regular Dirac
representative gives
\begin{align}
\tilde{\mathcal L}_{\text{MA}}
&=
e\Bigg[
\frac{i}{2}\bar\psi\gamma^\mu D_\mu\psi
-
\frac{i}{2}(\bar D_\mu\bar\psi)\gamma^\mu\psi
-
m\bar\psi\psi\nonumber\\
&\quad+
\frac{i}{3}
(\bar D_\mu\bar\psi)
\mathcal E^{\mu\nu}
D_\nu\psi
\Bigg]
+
\tilde{\mathcal L}_{\rm Gr}.
\label{eq:LD-E-metricaffine}
\end{align}
The gravitational density \(\tilde{\mathcal L}_{\rm Gr}\) is left unspecified.
Spinorial sector regularity is still controlled by the same mixed Hessian
coefficient \(\mathcal E^{\mu\nu}\), hence
\begin{equation*}
\ell^2+\ell_5^2\neq0 .
\end{equation*}

Variation of \eqref{eq:LD-E-metricaffine} with respect to \(\bar\psi\) first
gives the non-manifestly covariant equation
\begin{align}
&
\left[
i\gamma^\beta
-
\frac{i}{3}
\left(
\partial_\alpha\mathcal E^{\alpha\beta}
+
\Omega_\alpha\mathcal E^{\alpha\beta}
-
\mathcal E^{\alpha\beta}\Omega_\alpha
+
\mathcal E^{\alpha\beta}\frac1e\partial_\alpha e
\right)
\right]D_\beta\psi
\nonumber\\
&
-
\Bigg[
m
-
\frac{i}{2}
\left(
\partial_\alpha\gamma^\alpha
+
\Omega_\alpha\gamma^\alpha
-
\gamma^\alpha\Omega_\alpha
+
\gamma^\alpha\frac1e\partial_\alpha e
\right)
\nonumber\\
&\quad
+
\frac{i}{3}
\mathcal E^{\alpha\beta}
\left(
\partial_\alpha\Omega_\beta
+
\Omega_\alpha\Omega_\beta
\right)
\Bigg]\psi
=0 .
\label{eq:metric-affine-nonmanifest-E}
\end{align}
The spin-connection-curvature is
\begin{align*}
&\mathfrak R_{\alpha\beta}
:=
\partial_\alpha\Omega_\beta
-
\partial_\beta\Omega_\alpha
+
\Omega_\alpha\Omega_\beta
-
\Omega_\beta\Omega_\alpha,\\
&[D_\alpha,D_\beta]\psi=\mathfrak R_{\alpha\beta}\psi .
\end{align*}
Since \(\mathcal E^{\alpha\beta}=-\mathcal E^{\beta\alpha}\),
\begin{equation*}
\mathcal E^{\alpha\beta}
\left(
\partial_\alpha\Omega_\beta
+
\Omega_\alpha\Omega_\beta
\right)
=
\frac12\mathcal E^{\alpha\beta}\mathfrak R_{\alpha\beta}.
\end{equation*}

Following \cite{struckmeier2024pauli}, tildes denote relative tensor
densities of weight one,
\begin{equation*}
\tilde\gamma^\mu:=e\gamma^\mu,
\qquad
\tilde{\mathcal E}^{\mu\nu}:=e\mathcal E^{\mu\nu},
\end{equation*}
with
\begin{equation*}
e_{;\alpha}
=
\partial_\alpha e
-
e\,\Gamma^\rho{}_{\rho\alpha}
\end{equation*}
and
\begin{equation*}
\gamma^\mu{}_{;\alpha}
=
\partial_\alpha\gamma^\mu
+\left[\Omega_\alpha ,\gamma^\mu\right]
+\Gamma^\mu{}_{\lambda\alpha}\gamma^\lambda .
\end{equation*}
The bracket terms in \eqref{eq:metric-affine-nonmanifest-E} may then be
written as
\begin{align*}
&\partial_\alpha\gamma^\alpha
+\Omega_\alpha\gamma^\alpha
-\gamma^\alpha\Omega_\alpha
+\gamma^\alpha\frac1e\partial_\alpha e
\nonumber\\
&\hspace{2em}
=
\frac1e\,\tilde\gamma^\alpha{}_{;\alpha}
-
2\gamma^\alpha S^\beta{}_{\alpha\beta},
\\[0.5em]
&\partial_\alpha\mathcal E^{\alpha\beta}
+\Omega_\alpha\mathcal E^{\alpha\beta}
-\mathcal E^{\alpha\beta}\Omega_\alpha
+\mathcal E^{\alpha\beta}\frac1e\partial_\alpha e
\nonumber\\
&\hspace{2em}
=
\frac1e\,\tilde{\mathcal E}^{\alpha\beta}{}_{;\alpha}
-
2\mathcal E^{\xi\beta}S^\alpha{}_{\xi\alpha}
+
\mathcal E^{\alpha\xi}S^\beta{}_{\alpha\xi}.
\end{align*}
Thus \eqref{eq:metric-affine-nonmanifest-E} becomes
\begin{align}
&
\left[
i\gamma^\beta
-
\frac{i}{3}
\left(
\frac1e\,\tilde{\mathcal E}^{\alpha\beta}{}_{;\alpha}
-
2\mathcal E^{\xi\beta}S^\alpha{}_{\xi\alpha}
+
\mathcal E^{\alpha\xi}S^\beta{}_{\alpha\xi}
\right)
\right]D_\beta\psi
\nonumber\\
&
-
\left[
m
-
\frac{i}{2}
\left(
\frac1e\,\tilde{\gamma}^{\alpha}{}_{;\alpha}
-
2\gamma^\alpha S^\beta{}_{\alpha\beta}
\right)
+
\frac{i}{6}\mathcal E^{\alpha\beta}\mathfrak R_{\alpha\beta}
\right]\psi
=0 .
\label{eq:metric-affine-master-E}
\end{align}
Equivalently,
\begin{equation}
\big(i\Gamma^\beta_{\rm MA}D_\beta-M_{\rm MA}\big)\psi=0,
\label{eq:GenDirac-metricaffine}
\end{equation}
where
\begin{equation}
\Gamma^\beta_{\rm MA}
:=
\gamma^\beta
-
\frac{1}{3}
\left(
\frac{1}{e}\,
\tilde{\mathcal E}^{\alpha\beta}{}_{;\alpha}
-
2\mathcal E^{\xi\beta}S^\alpha{}_{\xi\alpha}
+
\mathcal E^{\alpha\xi}S^\beta{}_{\alpha\xi}
\right),
\label{eq:Gamma-metricaffine}
\end{equation}
and
\begin{equation}
M_{\rm MA}
:=
m
-
\frac{i}{2}
\left(
\frac{1}{e}\,
\tilde\gamma^\alpha{}_{;\alpha}
-
2\gamma^\alpha S^\beta{}_{\alpha\beta}
\right)
+
\frac{i}{6}\mathcal E^{\alpha\beta}\mathfrak R_{\alpha\beta}.
\label{eq:M-metricaffine}
\end{equation}

\subsection{Metric-compatible torsionful limit}
\label{subsec:metric-compatible-torsionful}

We first specialize the metric-affine equation
\eqref{eq:metric-affine-master-E} to a metric-compatible geometry with
torsion. Thus
\begin{equation*}
Q_{\lambda\mu\nu}=0,
\qquad
S^\lambda{}_{\mu\nu}\neq0 .
\end{equation*}
Together with the tetrad postulate \eqref{eq:tetrad-postulate-RC}, metric
compatibility implies \(\omega_{(ab)\mu}=0\), so the full tetrad connection
reduces to its Lorentz part. These conditions imply covariant constancy of
the curved Clifford generators and of the regularizing bivector,
\begin{equation*}
\gamma^\mu{}_{;\lambda}=0,
\qquad
\sigma^{\mu\nu}{}_{;\lambda}=0,
\qquad
{}^\star\sigma^{\mu\nu}{}_{;\lambda}=0,
\qquad
\mathcal E^{\mu\nu}{}_{;\lambda}=0 .
\end{equation*}
With the density convention of Sec.~\ref{sec:spacetime-metricaffine}, the
density-divergence terms in \eqref{eq:metric-affine-master-E} then drop out,
and the Dirac equation becomes
\begin{align}
&
\Bigg[
i\gamma^\beta
+
\frac{i}{3}
\left(
2\mathcal E^{\xi\beta}S^\alpha{}_{\xi\alpha}
-
\mathcal E^{\xi\alpha}S^\beta{}_{\xi\alpha}
\right)
\Bigg]D_\beta\psi
\nonumber\\
&
-
\Bigg[
m
+
i\gamma^\beta S^\alpha{}_{\beta\alpha}
+
\frac{i}{6}\mathcal E^{\alpha\beta}\mathfrak R_{\alpha\beta}
\Bigg]\psi
=0 .
\label{eq:dirac-RC-E-spin-curvature}
\end{align}
In the Riemann--Cartan case the spin curvature is related to the
Riemann--Cartan curvature tensor by
\begin{equation}
\mathfrak R_{\alpha\beta}
=
-\frac{i}{4}\sigma^{\mu\nu}R_{\mu\nu\alpha\beta}.
\label{eq:spin-curvature-RC-relation}
\end{equation}
Hence \eqref{eq:dirac-RC-E-spin-curvature} may be written as
\begin{align}
&
\Bigg[
i\gamma^\beta
+
\frac{i}{3}
\left(
2\mathcal E^{\xi\beta}S^\alpha{}_{\xi\alpha}
-
\mathcal E^{\xi\alpha}S^\beta{}_{\xi\alpha}
\right)
\Bigg]D_\beta\psi
\nonumber\\
&
-
\Bigg[
m
+
i\gamma^\beta S^\alpha{}_{\beta\alpha}
+
\frac{1}{24}\mathcal E^{\alpha\beta}\sigma^{\mu\nu}
R_{\mu\nu\alpha\beta}
\Bigg]\psi
=0 ,
\label{eq:dirac-RC-E}
\end{align}
where setting \(\ell_5=0\), so that
\(\mathcal E^{\mu\nu}=\ell\sigma^{\mu\nu}\), gives the
metric-compatible equation of \cite{struckmeier2024pauli} with the corrected
torsion-divergence sign. 

From \eqref{eq:dirac-RC-E}, it is immediate that the formal limit
$(\ell,\ell_5)\to(0,0)$ yields
\begin{equation}
\Big(
i\gamma^{\beta}D_{\beta}
-m
-i\gamma^{\beta}S^{\alpha}{}_{\beta\alpha}
\Big)\psi
=0,
\label{eq:semi-Hehl-Datta}
\end{equation}
where $S^{\alpha}{}_{\beta\alpha}$ is the torsion trace vector.
To make the irreducible torsion content of this equation explicit,
\eqref{eq:tetrad-postulate-RC} implies
\begin{equation}
\label{eq:omega-from-tetrad-postulate}
\omega_{\nu}{}^{a}{}_{b}
=
e_{b}{}^{\mu}
\left(
\Gamma^{\rho}{}_{\mu\nu}e^{a}{}_{\rho}
-\partial_{\nu}e^{a}{}_{\mu}
\right).
\end{equation}
Inserting the Riemann--Cartan decomposition
\begin{equation*}
\Gamma^{\rho}{}_{\mu\nu}
=
\overset{\circ}{\Gamma}{}^{\rho}{}_{\mu\nu}
+
K^{\rho}{}_{\mu\nu}
\end{equation*}
into \eqref{eq:omega-from-tetrad-postulate} yields
\begin{equation*}
\omega_{\nu}{}^{a}{}_{b}
=
\underbrace{
e_{b}{}^{\mu}
\left(
\overset{\circ}{\Gamma}{}^{\rho}{}_{\mu\nu}e^{a}{}_{\rho}
-\partial_{\nu}e^{a}{}_{\mu}
\right)
}_{\overset{\circ}{\omega}{}_{\nu}{}^{a}{}_{b}}
+
\underbrace{
e_{b}{}^{\mu}
K^{\rho}{}_{\mu\nu}
e^{a}{}_{\rho}
}_{K_{\nu}{}^{a}{}_{b}},
\end{equation*}
and hence
\begin{equation*}
\omega_{\mu}{}^{ab}
=
\overset{\circ}{\omega}{}_{\mu}{}^{ab}
+
K_{\mu}{}^{ab}.
\end{equation*}

The spinor connection \eqref{eq:spinor-connection-metricaffine} thus decomposes as
\begin{equation*}
\Omega_{\mu}
=
\underbrace{
\left(
-\frac{i}{4}
\sigma^{ab}\overset{\circ}{\omega}{}_{\mu ab}
\right)
}_{\overset{\circ}{\Omega}{}_{\mu}}
-
\frac{i}{4}\sigma^{ab}K_{\mu ab}.
\end{equation*}
Therefore, with
\begin{equation*}
\overset{\circ}{D}_{\mu}\psi
:=
\partial_{\mu}\psi
+
\overset{\circ}{\Omega}{}_{\mu}\psi,
\end{equation*}
one obtains
\begin{equation*}
D_{\mu}\psi
=
\overset{\circ}{D}_{\mu}\psi
-
\frac{i}{4}\sigma^{ab}K_{\mu ab}\psi.
\end{equation*}

Contracting with $\gamma^{\mu}$ and using the Clifford algebra, the
contortion contribution separates into a torsion-trace term and a
purely axial term \cite{delhom2020minimal}:
\begin{equation}
\label{eq:gammaD-decomp}
\gamma^{\mu}D_{\mu}\psi
=
\gamma^{\mu}\overset{\circ}{D}_{\mu}\psi
+
\gamma^{\mu}S^{\alpha}{}_{\mu\alpha}\psi
-
T\psi,
\end{equation}
where $T$ depends only on the totally antisymmetric component
$S_{[\mu\nu\rho]}$ of the torsion tensor,
\begin{equation*}
T
:=
\frac{i}{4}
\varepsilon_{\mu\nu\rho\sigma}
S^{\mu\nu\rho}
\gamma^{\sigma}\gamma^{5}.
\end{equation*}
Substituting \eqref{eq:gammaD-decomp} into
\eqref{eq:semi-Hehl-Datta}, the torsion-trace terms cancel, giving
\begin{equation}
\label{eq:EC-Dirac-axial}
\Big(
i\gamma^{\mu}\overset{\circ}{D}_{\mu}
-iT
-m
\Big)\psi
=0.
\end{equation}
This is the standard Einstein--Cartan Dirac equation with axial
torsion coupling. Upon eliminating torsion by means of the
Einstein--Cartan connection equation, it becomes the usual nonlinear
Hehl--Datta equation \cite{hehl1971nonlinear}. Thus, in the formal
limit $(\ell,\ell_5)\to(0,0)$, the minimally coupled Dirac equation is
insensitive to the non-axial irreducible components of torsion. The
additional non-axial torsion couplings in
\eqref{eq:dirac-RC-E} are induced terms proportional to $\ell$ or
$\ell_5$.

For purely axial, or totally antisymmetric, torsion,
$S^{\alpha}{}_{\beta\alpha}=0$, and
\eqref{eq:dirac-RC-E} reduces to
\begin{align}
0
&=
\underbrace{
\left(
i\gamma^\beta D_\beta-m
\right)\psi
}_{\text{Einstein--Cartan Dirac term}}
\nonumber\\
&\quad
-
\underbrace{
\left[
\frac{i}{3}
\mathcal E^{\xi\alpha}
S^\beta{}_{\xi\alpha}D_\beta
+
\frac{1}{24}
\mathcal E^{\alpha\beta}
\sigma^{\mu\nu}R_{\mu\nu\alpha\beta}
\right]\psi
}_{\text{induced dipole-type corrections}} .
\label{eq:HD-Q-E-general}
\end{align}

Remarkably, for $\ell_5=0$, \eqref{eq:HD-Q-E-general} reproduces, up to the identification of coupling parameters, the Pauli-type spin-curvature and spin-torsion structures described in Obukhov \textit{et al.} in the (axial torsion) Poincar\'e gauge framework when gravitational dipole moments, obtained from the Gordon decomposition of the Noether currents \cite{Hehl:1995ue}, are introduced as explicit {nonminimal} matter couplings \cite{obukhov2014spin}.
Here, by contrast, the same operator content arises from {minimal} coupling applied to the nondegenerate
Dirac Lagrangian \eqref{eq:L-reg-final}. 
The explicit dictionary and the term-by-term matching are collected in Appendix~\ref{app-obkhv}.

It is convenient to rewrite \eqref{eq:dirac-RC-E} in more explicit form. Using the Clifford-algebra identity of Appendix \ref{app:sigmaR}
\begin{equation*}
\sigma^{\alpha\beta}\sigma^{\mu\nu}R_{\mu\nu\alpha\beta}
=2R+2iP\,\gamma^5+4i\sigma^{\mu\nu}R_{[\mu\nu]},
\end{equation*}
where the right-hand side contains, respectively, scalar, pseudoscalar and tensorial terms. Here, we know that \cite{struckmeier21a}
\begin{equation}
R_{[\mu\nu]} =
S^{\alpha}{}_{\mu\nu;\alpha}
- 2\, S^{\beta}{}_{\mu\nu}\, S^{\alpha}{}_{\beta\alpha}
+ 2\, S^{\alpha}{}_{[\mu|\alpha;|\nu]}\,.
\label{R-antisym}
\end{equation}
The pseudoscalar $P$ is proportional to the standard Riemann-Cartan curvature pseudoscalar $X$ often used in the EC and PGT literature \cite{obukhov1997chiral}, with our conventions
\begin{equation*}
P=12\,X,\qquad 
X:=\frac{1}{4!}\,\varepsilon^{\mu\nu\rho\sigma}R_{\mu\nu\rho\sigma}.
\end{equation*}
Finally, we may write \eqref{eq:dirac-RC-E} in compact form
\begin{equation}
\left(
i\Gamma^\beta_{\rm RC}D_\beta
-
M_{\rm RC}
\right)\psi=0,
\label{eq:GenDirac-RC-E}
\end{equation}
where
\begin{equation}
\Gamma^\beta_{\rm RC}
:=
\gamma^\beta
+
\frac{1}{3}
\left(
2\mathcal E^{\xi\beta}S^\alpha{}_{\xi\alpha}
-
\mathcal E^{\xi\alpha}S^\beta{}_{\xi\alpha}
\right),
\label{eq:Gamma-RC-E}
\end{equation}
and
\begin{align}
M_{\rm RC}
&=
m
+
i\gamma^\beta S^\alpha{}_{\beta\alpha}
\nonumber\\
&\quad
+
\frac{\ell}{12}
\left(
R
+
iP\gamma^5
+
2i\sigma^{\mu\nu}R_{[\mu\nu]}
\right)
\nonumber\\
&\quad
+
\frac{\ell_5}{12}
\left(
P
-
iR\gamma^5
+
2\sigma^{\mu\nu}R_{[\mu\nu]}\gamma^5
\right).
\label{eq:M-RC-expanded-E}
\end{align}

\subsection{Torsionless nonmetric limit}
\label{subsec:torsionless-nonmetric}

We now consider the complementary limit in which torsion vanishes but metric
compatibility is not imposed. Thus
\begin{equation*}
S^\lambda{}_{\mu\nu}=0,
\qquad
Q_{\lambda\mu\nu}\neq0 .
\end{equation*}
Equation~\eqref{eq:metric-affine-master-E} then reduces to
\begin{align}
&
\left[
i\gamma^\beta
-
\frac{i}{3e}\,
\tilde{\mathcal E}^{\alpha\beta}{}_{;\alpha}
\right]D_\beta\psi
\nonumber\\
&
-
\left[
m
-
\frac{i}{2e}\,
\tilde\gamma^\alpha{}_{;\alpha}
+
\frac{i}{6}\mathcal E^{\alpha\beta}\mathfrak R_{\alpha\beta}
\right]\psi
=
0 .
\label{eq:dirac-N-E-density}
\end{align}

To make the nonmetricity dependence explicit, we introduce the two traces
\begin{equation}
Q_\lambda:=Q_{\lambda\mu}{}^\mu,
\qquad
\widehat Q_\lambda:=Q^\mu{}_{\mu\lambda}.
\label{eq:Q-traces-N}
\end{equation}
With the convention \(Q_{\lambda\mu\nu}=-\nabla_\lambda g_{\mu\nu}\), one has
\begin{equation*}
\gamma^\beta{}_{;\alpha}
=
\frac12 Q_\alpha{}^\beta{}_\lambda\gamma^\lambda .
\end{equation*}
Since \(\gamma^5{}_{;\alpha}=0\) in the Lorentz-spinor realization, the same
relation holds for the dual spin bivector, and therefore
\begin{equation*}
\mathcal E^{\mu\nu}{}_{;\alpha}
=
\frac12 Q_\alpha{}^\mu{}_\lambda\mathcal E^{\lambda\nu}
+
\frac12 Q_\alpha{}^\nu{}_\lambda\mathcal E^{\mu\lambda}.
\end{equation*}
It follows that
\begin{equation}
\frac1e\tilde\gamma^\alpha{}_{;\alpha}
=
\frac12
(\widehat Q_\lambda-Q_\lambda)\gamma^\lambda ,
\label{eq:gamma-density-nonmetric}
\end{equation}
and
\begin{align}
\frac1e\tilde{\mathcal E}^{\alpha\beta}{}_{;\alpha}
&=
\frac12
\left[
(\widehat Q_\lambda-Q_\lambda)\mathcal E^{\lambda\beta}
+
Q_\alpha{}^\beta{}_\lambda\mathcal E^{\alpha\lambda}
\right].
\label{eq:E-density-nonmetric}
\end{align}
Substituting \eqref{eq:gamma-density-nonmetric} and
\eqref{eq:E-density-nonmetric} into \eqref{eq:dirac-N-E-density} gives
\begin{align}
&
\Bigg[
i\gamma^\beta
-
\frac{i}{6}
\left(
(\widehat Q_\lambda-Q_\lambda)\mathcal E^{\lambda\beta}
+
Q_\alpha{}^\beta{}_\lambda\mathcal E^{\alpha\lambda}
\right)
\Bigg]D_\beta\psi
\nonumber\\
&
-
\Bigg[
m
-
\frac{i}{4}
(\widehat Q_\lambda-Q_\lambda)\gamma^\lambda
+
\frac{i}{6}\mathcal E^{\alpha\beta}\mathfrak R_{\alpha\beta}
\Bigg]\psi
=
0 .
\label{eq:dirac-N-E-explicit}
\end{align}

Equivalently, the torsionless nonmetric equation may be written as
\begin{equation}
\left(
i\Gamma^\beta_{\rm N}D_\beta
-
M_{\rm N}
\right)\psi
=
0,
\label{eq:GenDirac-N-E}
\end{equation}
where
\begin{equation}
\Gamma^\beta_{\rm N}
:=
\gamma^\beta
-
\frac16
\left[
(\widehat Q_\lambda-Q_\lambda)\mathcal E^{\lambda\beta}
+
Q_\alpha{}^\beta{}_\lambda\mathcal E^{\alpha\lambda}
\right],
\label{eq:Gamma-N-E}
\end{equation}
and
\begin{equation}
M_{\rm N}
:=
m
-
\frac{i}{4}
(\widehat Q_\lambda-Q_\lambda)\gamma^\lambda
+
\frac{i}{6}\mathcal E^{\alpha\beta}\mathfrak R_{\alpha\beta}.
\label{eq:M-N-E}
\end{equation}

For the formal limit \((\ell,\ell_5)\to(0,0)\), \eqref{eq:dirac-N-E-explicit}
reduces to the standard torsionless nonmetric Dirac equation in the
Lorentz-spinor prescription,
\begin{equation}
\left[
i\gamma^\beta D_\beta
-
m
+
\frac{i}{4}
(\widehat Q_\lambda-Q_\lambda)\gamma^\lambda
\right]\psi=0 .
\label{eq:ordinary-nonmetric-dirac-limit}
\end{equation}
which reproduces, for example, the spinor equation obtained in tetrad-affine
\(f(\mathcal Q)\) gravity in \cite{vignolo2022spinor}.

We leave the spin-curvature term
\(\mathcal E^{\alpha\beta}\mathfrak R_{\alpha\beta}\) unexpanded in this
limit, since the Riemann--Cartan curvature identities used in
Sec.~\ref{subsec:metric-compatible-torsionful} rely on metric compatibility
and do not directly apply to a general nonmetric connection.

\subsection{Levi--Civita limit}
\label{subsec:levi-civita-limit}

The usual torsionless and metric-compatible limit is
\begin{equation*}
S^\lambda{}_{\mu\nu}=0,
\qquad
Q_{\lambda\mu\nu}=0.
\end{equation*}
The affine connection and spinor derivative then reduce to their
Levi--Civita counterparts,
\begin{equation*}
\Gamma^\rho{}_{\mu\nu}
=
\overset{\circ}{\Gamma}{}^\rho{}_{\mu\nu},
\qquad
D_\mu\psi
=
\overset{\circ}{D}_{\mu}\psi.
\end{equation*}
When $\tilde{\mathcal L}_{\rm Gr}$ is chosen to be the
Einstein--Hilbert density, this corresponds to the GR geometry, however, 
regularization of the spinor part modifies the matter term to include induced
curvature couplings.

Equation~\eqref{eq:dirac-RC-E} reduces to
\begin{equation}
\left[
i\gamma^\mu\overset{\circ}{D}_{\mu}
-m
-\frac{1}{24}
\mathcal E^{\alpha\beta}\sigma^{\mu\nu}
\overset{\circ}{R}_{\mu\nu\alpha\beta}
\right]\psi
=0.
\label{eq:dirac-LC-E-unreduced}
\end{equation}
For the Levi--Civita curvature tensor,
\begin{equation*}
\overset{\circ}{R}_{[\mu\nu]}=0,
\qquad
\overset{\circ}{P}
:=
\frac12\varepsilon^{\mu\nu\alpha\beta}
\overset{\circ}{R}_{\mu\nu\alpha\beta}
=0,
\end{equation*}
where the second identity follows from the first Bianchi identity.
The curvature--Clifford identity of Appendix~\ref{app:sigmaR}
therefore gives
\begin{equation*}
\sigma^{\alpha\beta}\sigma^{\mu\nu}
\overset{\circ}{R}_{\mu\nu\alpha\beta}
=
2\overset{\circ}{R}.
\end{equation*}
Using \eqref{eq:dual-sigma-star} one likewise obtains
\begin{equation*}
{}^\star\sigma^{\alpha\beta}\sigma^{\mu\nu}
\overset{\circ}{R}_{\mu\nu\alpha\beta}
=
-2i\overset{\circ}{R}\gamma^5.
\end{equation*}
Consequently, \eqref{eq:dirac-LC-E-unreduced} becomes
\begin{equation}
\left[
i\gamma^\mu\overset{\circ}{D}_{\mu}
-m
-\frac{\overset{\circ}{R}}{12}
\left(
\ell-i\ell_5\gamma^5
\right)
\right]\psi
=0.
\label{eq:dirac-LC-E}
\end{equation}

The $\ell$ produces a scalar curvature-dependent mass correction,
whereas $\ell_5$ produces its parity-odd pseudoscalar partner. In
terms of
\begin{equation*}
m_{\rm s}(x)
:=
m+\frac{\ell}{12}\overset{\circ}{R}(x),
\qquad
m_5(x)
:=
\frac{\ell_5}{12}\overset{\circ}{R}(x),
\end{equation*}
the equation takes the form
\begin{equation*}
\left(
i\gamma^\mu\overset{\circ}{D}_{\mu}
-m_{\rm s}
+i m_5\gamma^5
\right)\psi
=0,
\end{equation*}
where $m_{\rm s}$ and $m_5$ are, respectively, scalar and
pseudoscalar spacetime-dependent mass components. In a background
whose curvature varies slowly on the scale of the fermion wavelength,
the corresponding leading-order WKB dispersion relation is
\begin{equation*}
p_\mu p^\mu
=
m_{\rm eff}^2(x),
\qquad
m_{\rm eff}^2(x)
:=
m_{\rm s}^2(x)+m_5^2(x).
\end{equation*}
Spatial variations of $\overset{\circ}{R}$ therefore define a
position-dependent matter-wave mass profile and may produce
refraction-like changes in finite-frequency fermion propagation.

The parity-even coupling
$\overset{\circ}{R}\bar\psi\psi$, which appears in the Dirac equation as
a curvature-dependent mass contribution, was previously derived within
the covariant-canonical framework
\cite{struckmeier2024pauli,vasak2023covariant,vasak23}.
Its cosmological implications were investigated directly in
\cite{benisty2019inflation}, while equivalent nonminimally coupled
fermion and fermion-condensate cosmologies were studied in
\cite{ribas2008cosmological,carloni2014non}.

\section{Characteristics and Causal Bounds}
\label{characteristics}

\subsection{Principal symbol, characteristics, and the Velo--Zwanziger criterion}

We briefly recall the notions needed to formulate the characteristic, or causality,
problem. Let \[Pu=0\] be a square linear system of order $r$ for an
$N$-component field $u$, so that $P$ is an $N\times N$ matrix
of scalar differential operators. In local coordinates
$x=(x^0,\ldots,x^{d-1})$ one may write, in multi-index notation,
\begin{equation}
P
=
\sum_{|\boldsymbol{\alpha}|\le r}
p_{\boldsymbol{\alpha}}(x)\,\partial^{\boldsymbol{\alpha}},
\label{eq:P-multiindex}
\end{equation}
where the coefficients $p_{\boldsymbol{\alpha}}(x)$ are $N\times N$ matrices
acting on the field components, while $\partial^{\boldsymbol{\alpha}}$ acts
componentwise on $u$. Equivalently, $\partial^{\boldsymbol{\alpha}}$ acts as a
scalar differential operator times the identity in field space. Here
\[
\boldsymbol{\alpha}
=(\alpha_0,\ldots,\alpha_{d-1})\in\mathbb{N}_0^{\,d},
\qquad
|\boldsymbol{\alpha}|:=\alpha_0+\cdots+\alpha_{d-1},
\]
and
\begin{equation*}
\partial^{\boldsymbol{\alpha}}
:=\partial_0^{\alpha_0}\cdots\partial_{d-1}^{\alpha_{d-1}} ,
\end{equation*}
where the superscripts on the factors denote the orders of the corresponding
coordinate derivatives.

For a covector \(\xi=\xi_\mu dx^\mu\in T_x^*\mathcal M\), we use the
multi-index power notation
\begin{equation*}
\xi^{\boldsymbol{\alpha}}
:=
(\xi_0)^{\alpha_0}\cdots(\xi_{d-1})^{\alpha_{d-1}} ,
\end{equation*}
where now the superscripts on the factors denote instead powers of the corresponding
covector components.

The principal symbol is read off from the top-order part of \(P\). One
keeps only the terms with \(|\boldsymbol{\alpha}|=r\) and makes the
operator-level substitution
\begin{equation*}
p_{\boldsymbol{\alpha}}(x)\,\partial^{\boldsymbol{\alpha}}
\longmapsto
p_{\boldsymbol{\alpha}}(x)\,\xi^{\boldsymbol{\alpha}} .
\end{equation*}

The principal symbol is then the ordinary \(N\times N\) matrix-valued
homogeneous polynomial
\begin{equation}
\sigma_{P}(x,\xi)
:=
\sum_{|\boldsymbol{\alpha}|=r}
p_{\boldsymbol{\alpha}}(x)\,\xi^{\boldsymbol{\alpha}} .
\label{eq:principal-symbol}
\end{equation}

A nonzero covector $\xi\neq 0$ is said to be characteristic at $x$ if
the symbol matrix $\sigma_{P}(x,\xi)$ is not invertible. Since the
symbol is square, this is equivalent to
\begin{equation}
\det\sigma_{P}(x,\xi)=0.
\label{eq:char-covector}
\end{equation}
Accordingly, a hypersurface $\Sigma:\Phi(x)=0$ is characteristic at
$x\in\Sigma$ if its conormal
\begin{equation*}
n_\mu:=\partial_\mu\Phi
\end{equation*}
satisfies
\begin{equation}
\det\sigma_{P}(x,n)=0;
\label{eq:characteristic-hypersurface}
\end{equation}
otherwise $\Sigma$ is non-characteristic
\cite{hilbert1985methods,taylor1996partial}.

Characteristics matter for two closely related reasons. First, a Cauchy problem
is normally posed on a non-characteristic hypersurface, where the principal part
can be solved locally for the derivatives normal to that hypersurface. On a
characteristic hypersurface this normal evolution map is singular, so standard
Cauchy data cannot in general be prescribed freely, and existence or uniqueness
may fail \cite{vitagliano2014characteristics}. Second, in the wavefront picture,
the boundary of a localized disturbance is a characteristic hypersurface, or
wavefront \cite{hilbert1985methods,vitagliano2014characteristics}.

At a point $x$, the condition $\det\sigma_{P}(x,n)=0$ is homogeneous in the
nonzero conormal $n_\mu$, hence it defines a cone in $T_x^*\mathcal M$, the
characteristic cone \cite{hilbert1985methods,taylor1996partial}. In Lorentzian
problems this cone plays the role of an effective causal structure for
high-frequency front propagation: for the wave operator the characteristics are
null hypersurfaces, \textit{i.e.} the light cone \cite{hilbert1985methods}.

For a first-order system ($r=1$), Eq.~\eqref{eq:P-multiindex} contains
one-derivative terms with $|\boldsymbol{\alpha}|=1$ and a zeroth-order term
with $\boldsymbol{\alpha}=(0,\ldots,0)$. Since
$\partial^{(0,\ldots,0)}$ is the identity operator, the latter is simply a
matrix multiplication term, which we collect into $B(x)$. The one-derivative terms
determine the principal symbol. For example, in four dimensions,
\[
p_{(1,0,0,0)}(x)\partial_0
\longmapsto
p_{(1,0,0,0)}(x)\xi_0 ,
\]
with analogous terms for the other coordinate directions. Collecting the
one-derivative coefficient matrices into $A^\mu(x)$, the operator has the form
\[
P=A^\mu(x)\partial_\mu+B(x),
\]
and hence
\begin{equation}
\sigma_{P}(x,\xi)=A^\mu(x)\xi_\mu .
\label{eq:first-order-symbol}
\end{equation}

For the standard Dirac operator in particular, \eqref{eq:first-order-symbol}
gives $\sigma_D(x,n)=i\gamma^\mu n_\mu$, and therefore
\begin{equation}
\det(i\gamma^\mu n_\mu)=(n^2)^2,
\qquad
n^2:=g^{\mu\nu}n_\mu n_\nu ,
\label{eq:standard-dirac-characteristics}
\end{equation}
so Dirac wavefronts are also null \cite{vitagliano2014characteristics}.

To connect the conormal $n_\mu$ to a front speed, consider locally
$\Sigma:\Phi(t,\vec x)=0$ with
\[
n_\mu=\partial_\mu\Phi=(n_0,\vec n)
\]
in an inertial chart of signature $(+,-,-,-)$. If the front moves in the
direction of its spatial normal, then along $\Sigma$
\begin{equation*}
0=\frac{d\Phi}{dt}
=n_0+\vec n\cdot\frac{d\vec x}{dt}.
\end{equation*}
Up to the irrelevant orientation of the normal, this gives
\begin{equation*}
v_{\rm front}=\frac{|n_0|}{|\vec n|},
\qquad
n^2=n_0^2-|\vec n|^2
=|\vec n|^2\left(v_{\rm front}^2-1\right).
\end{equation*}
Hence $n^2>0$, namely a timelike conormal, implies
$v_{\rm front}>1$, \textit{i.e.} a superluminal wavefront, while causal fronts
relative to the background metric require $n^2\le 0$.

Operationally, for first-order systems one can extract the characteristic
equation via the Hadamard discontinuity, or Velo--Zwanziger shock, method
\cite{velo1969propagation,hilbert1985methods,fabbri2018non}. One assumes that
$u$ is continuous across $\Sigma$, but its first derivatives have a jump of the
form
\begin{equation*}
[\partial_\mu u]_\Sigma
:=
(\partial_\mu u)_+\big|_\Sigma
-
(\partial_\mu u)_-\big|_\Sigma
=n_\mu a,
\end{equation*}
where $a$ is a nonzero field-space discontinuity amplitude. Keeping only the
top-derivative terms in the field equation then yields
\begin{equation*}
\sigma_{P}(x,n)a=0,
\end{equation*}
so $a$ lies in the kernel of the symbol matrix. A nontrivial wavefront can
therefore exist only if
\begin{equation}
\det\sigma_{P}(x,n)=0.
\label{eq:hadamard-characteristic-determinant}
\end{equation}

In the Velo--Zwanziger reading one fixes a real spatial covector $\vec n$ and
views the characteristic equation as an equation for $n_0$: complex roots signal
a failure of the necessary real-root condition for hyperbolicity, while real
roots with $n^2>0$ describe acausal, superluminal characteristic fronts
\cite{fabbri2018non}. The criterion is used here as an exclusion test: complex
roots or timelike characteristic covectors establish a pathology, whereas their
absence does not by itself prove strong hyperbolicity or complete well-posedness.

The preceding discussion also fixes which induced terms are relevant for the
characteristic analysis. The principal symbol is sensitive only to the
Clifford-valued coefficient of the highest derivative. Connections appearing
inside \(D_\mu\) (\textit{e.g.} electromagnetic, weak, color) implement
local transformations of the corresponding internal or Lorentz-frame
components, after the highest derivative is isolated, their connection
coefficients multiply \(\psi\) algebraically. The same is true of masses, field
strengths produced by commutators, and curvature corrections. Such terms may
affect finite-wavelength propagation, spectra, or effective masses, but they do
not change the wavefront cone. A change of characteristics requires a change in
the matrix multiplying the conormal in the principal symbol. In the present
metric-affine construction this change is carried only by the torsion and
nonmetricity pieces of \(\Gamma^\mu_{\rm MA}\).

\subsection{Metric-affine characteristic determinant}
\label{subsec:metric-affine-characteristic-determinant}

We now apply the preceding characteristic criterion to the full
metric-affine Dirac equation
\begin{equation}
\left(
i\Gamma^\beta_{\rm MA}D_\beta-M_{MA}
\right)\psi=0 ,
\label{eq:GenDirac-characteristics}
\end{equation}
where \(M_{MA}\) denotes the lower-order matrix in
Eq.~\eqref{eq:M-metricaffine}, and
\begin{equation}
\Gamma^\beta_{\rm MA}
=
\gamma^\beta
-
\frac{1}{3}
\left(
\frac{1}{e}\tilde{\mathcal E}^{\alpha\beta}{}_{;\alpha}
-
2\mathcal E^{\xi\beta}S^\alpha{}_{\xi\alpha}
+
\mathcal E^{\alpha\xi}S^\beta{}_{\alpha\xi}
\right).
\label{eq:Gamma-characteristics}
\end{equation}
Since
\begin{equation*}
D_\beta=\partial_\beta+\Omega_\beta ,
\end{equation*}
while \(\Omega_\beta\) and \(M_{MA}\) contain no derivatives of \(\psi\), only
\(\Gamma^\beta_{\rm MA}\partial_\beta\psi\) contributes to the principal
symbol. Thus
\begin{equation}
\sigma(x,\xi)=i\Gamma^\beta_{\rm MA}(x)\xi_\beta .
\label{eq:principal-symbol-characteristics}
\end{equation}
The overall factor of \(i\) does not affect the zero set of the determinant,
so for a conormal \(n_\beta\) the characteristic equation is
\begin{equation}
\det\!\left(\mathbb M\right)=0, \quad \mathbb M:=\Gamma^\beta_{\rm MA}n_\beta .
\label{eq:characteristic-equation-general}
\end{equation}
The characteristic problem is local, so all background quantities are understood
to be frozen at the point under consideration.

Define
\begin{equation}
\mathfrak X^\beta
:=
\frac{1}{e}\tilde{\mathcal E}^{\alpha\beta}{}_{;\alpha}
-
2\mathcal E^{\xi\beta}S^\alpha{}_{\xi\alpha}
+
\mathcal E^{\alpha\xi}S^\beta{}_{\alpha\xi},
\label{eq:Xfrak-def}
\end{equation}
so that
\begin{equation*}
\Gamma^\beta_{\rm MA}
=
\gamma^\beta-\frac13\mathfrak X^\beta .
\end{equation*}
Therefore the characteristic matrix can be written as
\begin{equation}
\mathbb M
=
n_\beta\gamma^\beta
-
\frac13 n_\beta\mathfrak X^\beta .
\label{eq:M-def}
\end{equation}

The Clifford structure of \(\mathbb M\) is simpler than the general Dirac
basis decomposition 
\begin{equation} 
\mathbb M = \mathcal S + i\mathcal P\gamma^5 + \mathcal V_\mu\gamma^\mu + \mathcal A_\mu\gamma^\mu\gamma^5 + \frac12\mathcal T_{\mu\nu}\sigma^{\mu\nu} 
\label{eq:M-dirac-basis} 
\end{equation}
would suggest. From \eqref{eq:dual-sigma-star} and
\eqref{eq:E-metricaffine}, the bivector \(\mathcal E^{\mu\nu}\) lies in the
tensor sector of the Clifford algebra. The same is true of
\(\mathfrak X^\beta\): the covariant derivative in
\(\tilde{\mathcal E}^{\alpha\beta}{}_{;\alpha}\) and the torsion contractions
in \eqref{eq:Xfrak-def} do not generate scalar, pseudoscalar, vector, or
axial-vector Clifford components. Thus we may write
\begin{equation}
\mathfrak X^\beta
=
\frac12 X^\beta{}_{\mu\nu}\sigma^{\mu\nu},
\qquad
X^\beta{}_{\mu\nu}=-X^\beta{}_{\nu\mu}.
\label{eq:X-tensor-expansion}
\end{equation}
Substitution into \eqref{eq:M-def} gives
\begin{equation*}
\mathbb M
=
n_\mu\gamma^\mu
-
\frac16 n_\beta X^\beta{}_{\mu\nu}\sigma^{\mu\nu}.
\end{equation*}
Equivalently,
\begin{equation}
\mathbb M
=
n_\mu\gamma^\mu
+
\frac12\mathcal T_{\mu\nu}\sigma^{\mu\nu},
\label{eq:M-vector-tensor-form}
\end{equation}
where
\begin{equation}
\mathcal T_{\mu\nu}
=
-\frac13 n_\beta X^\beta{}_{\mu\nu},
\qquad
\mathcal T_{\mu\nu}=-\mathcal T_{\nu\mu}.
\label{eq:T-X}
\end{equation}

We now use the standard determinant formula for a Dirac matrix of the
vector--tensor form \eqref{eq:M-vector-tensor-form}. Define
\begin{equation}
({}^\star\mathcal T)_{\mu\nu}
:=
\frac12\varepsilon_{\mu\nu}{}^{\rho\sigma}
\mathcal T_{\rho\sigma},
\label{eq:T-dual}
\end{equation}
and
\begin{equation}
\mathcal T_{\pm\mu\nu}
:=
\frac12
\left(
\mathcal T_{\mu\nu}
\pm i({}^\star\mathcal T)_{\mu\nu}
\right),
\qquad
({}^\star\mathcal T_\pm)_{\mu\nu}
=
\mp i\mathcal T_{\pm\mu\nu}.
\label{eq:T-pm}
\end{equation}
In a chiral representation, with
\begin{equation*}
\sigma^\mu=(\mathbf 1,\sigma^j),
\qquad
\bar\sigma^\mu=(\mathbf 1,-\sigma^j),
\end{equation*}
\eqref{eq:M-vector-tensor-form} admits the $2\times2$ block decomposition
\begin{equation}
\mathbb M
=
\begin{pmatrix}
\frac{i}{2}\mathcal T_{-\mu\nu}\sigma^\mu\bar\sigma^\nu
&
n_\mu\sigma^\mu
\\[4pt]
n_\mu\bar\sigma^\mu
&
\frac{i}{2}\mathcal T_{+\mu\nu}\bar\sigma^\mu\sigma^\nu
\end{pmatrix}.
\label{eq:M-chiral-block}
\end{equation}
The determinant of a $2\times2$ block matrix with $2\times2$ blocks can be written as
\begin{align*}
\det\!\begin{pmatrix}A&B\\ C&D\end{pmatrix}
=\det(AD)+\det(BC)-\mathrm{tr}\!\big(\bar B\,A\,\bar C\,D\big),
\\
\bar B:=\mathrm{adj}(B),\quad \bar C:=\mathrm{adj}(C),
\end{align*}
where $\mathrm{adj}(\cdot)$ denotes the matrix adjoint. This yields the closed form expression~\cite{kostelecky2013fermions,red2008parametrization}:
\begin{equation}
\det\mathbb M
=
(n^2)^2
+
(\mathcal T_-^2)(\mathcal T_+^2)
-
8n_\mu\mathcal T_-^{\mu\nu}
\mathcal T_{+\nu\rho}n^\rho .
\label{eq:det-vector-tensor}
\end{equation}
Thus the general metric-affine characteristic equation is
\begin{equation}
(n^2)^2
+
(\mathcal T_-^2)(\mathcal T_+^2)
-
8n_\mu\mathcal T_-^{\mu\nu}
\mathcal T_{+\nu\rho}n^\rho
=
0.
\label{eq:characteristic-final}
\end{equation}

It remains to simplify the dependence on the two regularization parameters.
By Legendre regularity,
\begin{equation*}
\ell^2+\ell_5^2\neq0 .
\end{equation*}
We may therefore define
\begin{equation}
\lambda:=\sqrt{\ell^2+\ell_5^2},
\qquad
\cos\theta:=\frac{\ell}{\lambda},
\qquad
\sin\theta:=\frac{\ell_5}{\lambda}.
\label{eq:lambda-theta}
\end{equation}
Then
\begin{equation}
\ell-i\ell_5\gamma^5
=
\lambda e^{-i\theta\gamma^5}.
\label{eq:ell-chiral-phase}
\end{equation}
Since
\begin{equation}
{}^\star\sigma^{\mu\nu}
=
-i\sigma^{\mu\nu}\gamma^5,
\qquad
[\sigma^{\mu\nu},\gamma^5]=0,
\label{eq:sigma-gamma5-identities}
\end{equation}
the regularizing bivector may be written as
\begin{equation}
\mathcal E^{\mu\nu}
=
\lambda\sigma^{\mu\nu}e^{-i\theta\gamma^5}.
\label{eq:E-chiral-phase}
\end{equation}
Now we define
\begin{equation}
R_\theta
:=
\exp\!\left(-\frac{i}{2}\theta\gamma^5\right),
\label{eq:Rtheta}
\end{equation}
then, since \(\sigma^{\mu\nu}\) commutes with \(\gamma^5\),
\begin{equation}
R_\theta\sigma^{\mu\nu}R_\theta
=
\sigma^{\mu\nu}R_\theta^2
=
\sigma^{\mu\nu}e^{-i\theta\gamma^5}.
\label{eq:Rtheta-sigma}
\end{equation}
On the other hand, since \(\gamma^\mu\) anticommutes with \(\gamma^5\),
\begin{equation*}
R_\theta\gamma^\mu
=
\gamma^\mu R_\theta^{-1},
\end{equation*}
and therefore
\begin{equation}
R_\theta\gamma^\mu R_\theta
=
\gamma^\mu .
\label{eq:Rtheta-gamma}
\end{equation}
Equations \eqref{eq:E-chiral-phase}--\eqref{eq:Rtheta-gamma} give
\begin{equation}
\mathcal E^{\mu\nu}
=
\lambda R_\theta\sigma^{\mu\nu}R_\theta ,
\label{eq:E-chiral-rotation}
\end{equation}
while the vector Dirac matrix \(\gamma^\mu\) is unchanged by the same
double-sided operation.

The same relation also holds after the metric-affine differentiation
appearing in \(\mathfrak X^\beta\). Indeed, \(R_\theta\) is built only from
\(\gamma^5\), and is thus covariantly constant. Since \(R_\theta\) carries no spacetime
indices, we have
\begin{equation}
\left(
R_\theta A^{\mu\nu}R_\theta
\right)_{;\alpha}
=
R_\theta A^{\mu\nu}{}_{;\alpha}R_\theta
\label{eq:covder-Rtheta}
\end{equation}
for any Clifford-valued bivector \(A^{\mu\nu}\). Therefore
\begin{equation}
\frac1e
\tilde{\mathcal E}^{\alpha\beta}{}_{;\alpha}(\ell,\ell_5)
=
R_\theta
\left[
\frac1e
\tilde{\mathcal E}^{\alpha\beta}{}_{;\alpha}(\lambda,0)
\right]
R_\theta .
\label{eq:E-density-chiral}
\end{equation}
The two torsion terms in \(\mathfrak X^\beta\) are algebraic in
\(\mathcal E^{\mu\nu}\), so they transform in the same way. Hence
\begin{equation}
\mathfrak X^\beta(\ell,\ell_5)
=
R_\theta
\mathfrak X^\beta(\lambda,0)
R_\theta .
\label{eq:X-chiral}
\end{equation}
Together with \eqref{eq:Rtheta-gamma}, this yields
\begin{equation}
\Gamma^\beta_{\rm MA}(\ell,\ell_5)
=
R_\theta
\Gamma^\beta_{\rm MA}(\lambda,0)
R_\theta ,
\label{eq:Gamma-chiral}
\end{equation}
and, after contraction with \(n_\beta\),
\begin{equation}
\mathbb M(\ell,\ell_5)
=
R_\theta
\mathbb M(\lambda,0)
R_\theta .
\label{eq:M-chiral}
\end{equation}

The determinant is unchanged, because
\begin{equation}
\det R_\theta
=
\exp\!\left[
-\frac{i}{2}\theta\,\operatorname{tr}(\gamma^5)
\right]
=
1.
\label{eq:det-Rtheta}
\end{equation}
Therefore
\begin{equation}
\det\mathbb M(\ell,\ell_5)
=
\det\mathbb M(\lambda,0)
.
\label{eq:det-lambda-reduction}
\end{equation}
Thus the two-parameter characteristic equation can be obtained by working in
the parity-even representative
\begin{equation}
\mathcal E^{\mu\nu}
=
\lambda\sigma^{\mu\nu}
\label{eq:E-lambda-sector-convention}
\end{equation}
and then interpreting \(\lambda\) as the invariant length
\(\sqrt{\ell^2+\ell_5^2}\) of the original pair \((\ell,\ell_5)\). In
particular, any characteristic bound derived in the \(\ell_5=0\) sector is
promoted to the full two-parameter theory by the replacement
\begin{equation}
\ell^2
\longmapsto
\lambda^2
=
\ell^2+\ell_5^2 .
\label{eq:ell-to-lambda-characteristics}
\end{equation}
The distinction between \(\ell\) and \(\ell_5\) remains present in lower-order
terms and in parity-odd dipole couplings, but it does not affect the
principal characteristic determinant. The following subsections use this construction for convenience.

\subsection{Riemann--Cartan sectors}
\label{subsec:RC-torsion-VZ}

We first specialize to the metric-compatible torsionful limit,
\begin{equation*}
Q_{\lambda\mu\nu}=0,
\qquad
S^\lambda{}_{\mu\nu}\neq0 .
\end{equation*}
Using the convention \eqref{eq:E-lambda-sector-convention}, the principal
coefficient becomes
\begin{equation}
\Gamma^\beta_{\rm RC}
=
\gamma^\beta
+
\frac{\lambda}{3}
\left(
2\sigma^{\xi\beta}S^\alpha{}_{\xi\alpha}
-
\sigma^{\xi\alpha}S^\beta{}_{\xi\alpha}
\right).
\label{eq:Gamma-RC-lambda}
\end{equation}
The characteristic equation is
\begin{equation}
\det\!\left(\Gamma^\mu_{\rm RC}n_\mu\right)=0 .
\label{eq:char-RC-det}
\end{equation}

We decompose torsion into its Lorentz-irreducible pieces,
\begin{equation*}
t_\mu:=S^\alpha{}_{\mu\alpha},
\qquad
s^\mu:=-\frac16\varepsilon^{\mu\nu\rho\sigma}S_{\nu\rho\sigma},
\end{equation*}
and
\begin{equation*}
q^\alpha{}_{\mu\alpha}=0,
\qquad
\varepsilon^{\mu\nu\rho\sigma}q_{\nu\rho\sigma}=0 .
\end{equation*}
Then, the decomposition is
\begin{equation}
S^\beta{}_{\xi\alpha}
=
\frac23\delta^\beta{}_{[\alpha}t_{\xi]}
+
\varepsilon^\beta{}_{\xi\alpha\lambda}s^\lambda
+
q^\beta{}_{\xi\alpha}.
\label{eq:torsion-irrep-RC}
\end{equation}
Substitution into \eqref{eq:Gamma-RC-lambda} gives
\begin{equation}
\Gamma^\beta_{\rm RC}
=
\gamma^\beta
+
\frac{4\lambda}{9}\sigma^{\xi\beta}t_\xi
-
\frac{\lambda}{3}\sigma^{\xi\alpha}
\varepsilon^\beta{}_{\xi\alpha\lambda}s^\lambda
-
\frac{\lambda}{3}\sigma^{\xi\alpha}q^\beta{}_{\xi\alpha}.
\label{eq:Gamma-RC-irrep}
\end{equation}
Thus
\begin{align}
\Gamma^\mu_{\rm RC}n_\mu
&=
n_\mu\gamma^\mu
+
\frac{4\lambda}{9}\sigma^{\nu\mu}t_\nu n_\mu
-
\frac{\lambda}{3}\sigma^{\nu\rho}
\varepsilon^\mu{}_{\nu\rho\sigma}s^\sigma n_\mu
\nonumber\\
&\quad
-
\frac{\lambda}{3}\sigma^{\nu\rho}q^\mu{}_{\nu\rho}n_\mu .
\label{eq:Gamma-n-RC-irrep}
\end{align}
Comparing with \eqref{eq:M-vector-tensor-form}, one obtains
\begin{equation}
\mathcal T_{\mu\nu}^{\rm RC}
=
\frac{8\lambda}{9}t_{[\mu}n_{\nu]}
-
\frac{2\lambda}{3}
\varepsilon^\rho{}_{\mu\nu\alpha}n_\rho s^\alpha
-
\frac{2\lambda}{3}n_\rho q^\rho{}_{\mu\nu}.
\label{eq:T-RC-irrep}
\end{equation}
Together with \eqref{eq:det-vector-tensor}, this gives the general
Riemann--Cartan characteristic determinant in irreducible torsion variables.

\subsubsection{Pure axial torsion}
\label{subsubsec:RC-pure-axial-VZ}

For purely axial torsion,
\begin{equation*}
t_\mu=0,
\qquad
q^\lambda{}_{\mu\nu}=0,
\end{equation*}
the tensor coefficient reduces to
\begin{equation*}
\mathcal T_{\mu\nu}^{(s)}
=
-\frac{2\lambda}{3}
\varepsilon^\rho{}_{\mu\nu\alpha}n_\rho s^\alpha .
\end{equation*}
The determinant becomes
\begin{equation}
\det\!\left(\Gamma^\mu_{(s)}n_\mu\right)
=
\frac{1}{81}
\left[
9n^2
+
4\lambda^2
\left(
n^2s^2-(n\cdot s)^2
\right)
\right]^2 .
\label{eq:det-RC-axial}
\end{equation}
Thus the characteristic equation is
\begin{equation}
9n^2
+
4\lambda^2
\left(
n^2s^2-(n\cdot s)^2
\right)
=
0,
\label{eq:char-RC-axial}
\end{equation}
or equivalently
\begin{equation}
\left(9+4\lambda^2s^2\right)n^2
=
4\lambda^2(n\cdot s)^2 .
\label{eq:char-RC-axial-rearranged}
\end{equation}

\begin{itemize}
\item \textit{Timelike axial torsion.}
If \(s^\mu\) is timelike, choose its rest frame,
\begin{equation*}
s^\mu=(s_0,0,0,0).
\end{equation*}
Then \eqref{eq:char-RC-axial-rearranged} gives
\begin{equation*}
n_0^2
=
\left(
1+\frac{4\lambda^2s_0^2}{9}
\right)|\vec n|^2 .
\end{equation*}
Hence the roots are real, but
\begin{equation*}
\frac{n_0^2}{|\vec n|^2}>1
\end{equation*}
for \(s_0\neq0\). Timelike axial torsion is therefore acausal.

\item \textit{Lightlike axial torsion.}
If \(s^\mu\) is lightlike, choose
\begin{equation*}
s^\mu=S(1,1,0,0).
\end{equation*}
For the spatial direction parallel to the spatial part of \(s^\mu\), take
\begin{equation*}
n_\mu=(n_0,n_1,0,0).
\end{equation*}
Then \eqref{eq:char-RC-axial} becomes
\begin{equation*}
(n_0^2-n_1^2)
=
\alpha_s(n_0+n_1)^2,
\quad
\alpha_s:=\frac{4\lambda^2S^2}{9}.
\end{equation*}
One root is null, \(n_0=-n_1\). The other root is
\begin{equation*}
\frac{n_0}{n_1}
=
\frac{1+\alpha_s}{1-\alpha_s},
\end{equation*}
whenever \(\alpha_s\neq1\), and has absolute value larger than one. Thus
the lightlike axial sector admits timelike characteristic covectors. At the
exceptional value \(\alpha_s=1\), transverse spatial directions give a
degenerate or non-hyperbolic characteristic polynomial. Hence the lightlike
axial sector is not admissible.

\item \textit{Spacelike axial torsion.}
If \(s^\mu\) is spacelike, choose
\begin{equation*}
s^\mu=(0,s_1,0,0).
\end{equation*}
Equation \eqref{eq:char-RC-axial-rearranged} gives
\begin{equation*}
n_0^2
=
n_2^2+n_3^2
+
\frac{9}{9-4\lambda^2s_1^2}\,n_1^2 .
\end{equation*}
For
\begin{equation*}
9-4\lambda^2s_1^2>0,
\qquad
\text{i.e.}
\qquad
|s_1|<\frac{3}{2\lambda},
\end{equation*}
the roots are real for all spatial directions, but propagation parallel to
\(\vec s\) gives
\begin{equation*}
\frac{n_0^2}{|\vec n|^2}
=
\frac{9}{9-4\lambda^2s_1^2}>1
\end{equation*}
for \(s_1\neq0\). Thus this regime is hyperbolic but acausal. For
\[
9-4\lambda^2s_1^2<0,
\]
the same parallel spatial direction gives imaginary \(n_0\), so the
Velo--Zwanziger real-root test fails. The threshold
\(9-4\lambda^2s_1^2=0\) is a degenerate limiting case and is not an
admissible hyperbolic regime.
\end{itemize}

Therefore every nonzero purely axial torsion background is excluded. This result also applies to the already mentioned quadratic
nonminimal fermion coupling in the (axial) Poincaré-gauge construction of \cite{obukhov2014spin}.

\subsubsection{Pure trace torsion}
\label{subsubsec:RC-pure-trace-VZ}

For purely trace torsion,
\begin{equation*}
s^\mu=0,
\qquad
q^\lambda{}_{\mu\nu}=0,
\end{equation*}
one has
\begin{equation*}
\mathcal T_{\mu\nu}^{(\mathrm{tr})}
=
\frac{8\lambda}{9}t_{[\mu}n_{\nu]} .
\end{equation*}
The determinant reduces to
\begin{equation}
\det\!\left(\Gamma^\mu_{(\mathrm{tr})}n_\mu\right)
=
\left[
n^2
+
\frac{16\lambda^2}{81}
\left(
n^2t^2-(n\cdot t)^2
\right)
\right]^2 .
\label{eq:det-RC-trace}
\end{equation}
Thus the characteristic equation is
\begin{equation}
\left(
1+\frac{16\lambda^2}{81}t^2
\right)n^2
=
\frac{16\lambda^2}{81}(n\cdot t)^2 .
\label{eq:char-RC-trace}
\end{equation}

\begin{itemize}
\item \textit{Timelike trace torsion.}
If \(t^\mu\) is timelike, choose
\begin{equation*}
t^\mu=(t_0,0,0,0).
\end{equation*}
Then
\begin{equation*}
n_0^2
=
\left(
1+\frac{16\lambda^2t_0^2}{81}
\right)|\vec n|^2 ,
\end{equation*}
so the characteristic roots are real but superluminal for \(t_0\neq0\).

\item \textit{Lightlike trace torsion.}
If \(t^\mu\) is lightlike, choose
\begin{equation*}
t^\mu=T(1,1,0,0).
\end{equation*}
For \(n_\mu=(n_0,n_1,0,0)\), Eq.~\eqref{eq:char-RC-trace} gives
\begin{equation*}
(n_0^2-n_1^2)
=
\alpha_t(n_0+n_1)^2,
\qquad
\alpha_t:=\frac{16\lambda^2T^2}{81}.
\end{equation*}
Apart from the null root \(n_0=-n_1\), the second root is
\begin{equation*}
\frac{n_0}{n_1}
=
\frac{1+\alpha_t}{1-\alpha_t},
\end{equation*}
for \(\alpha_t\neq1\), and is timelike. At the exceptional value
\(\alpha_t=1\), the characteristic polynomial degenerates or fails to give
real roots in transverse directions. Thus the lightlike trace sector is not
admissible.

\item \textit{Spacelike trace torsion.}
If \(t^\mu\) is spacelike, choose
\begin{equation*}
t^\mu=(0,t_1,0,0).
\end{equation*}
Equation \eqref{eq:char-RC-trace} gives
\begin{equation*}
n_0^2
=
n_2^2+n_3^2
+
\frac{1}{1-\frac{16\lambda^2t_1^2}{81}}\,n_1^2 .
\end{equation*}
For
\begin{equation*}
1-\frac{16\lambda^2t_1^2}{81}>0,
\qquad
\text{i.e.}
\qquad
|t_1|<\frac{9}{4\lambda},
\end{equation*}
the roots are real, but propagation parallel to \(\vec t\) gives
\begin{equation*}
\frac{n_0^2}{|\vec n|^2}
=
\frac{1}{1-\frac{16\lambda^2t_1^2}{81}}>1
\end{equation*}
for \(t_1\neq0\). For
\[
1-\frac{16\lambda^2t_1^2}{81}<0,
\]
the parallel spatial direction gives imaginary roots. The equality case is
again degenerate.
\end{itemize}

Therefore every nonzero purely trace torsion background is excluded.

\subsubsection{Pure tensor torsion}
\label{subsubsec:RC-pure-tensor-VZ}

For purely tensorial torsion,
\begin{equation*}
t_\mu=0,
\qquad
s^\mu=0,
\end{equation*}
the tensor coefficient is
\begin{equation*}
\mathcal T_{\mu\nu}^{(q)}
=
-\frac{2\lambda}{3}n_\rho q^\rho{}_{\mu\nu}.
\end{equation*}
It is useful to define
\begin{equation}
F_{\mu\nu}:=n^\alpha q_{\alpha\mu\nu},
\qquad
w_\mu:=n^\alpha n^\beta q_{\alpha\beta\mu}.
\label{eq:F-w-q-def}
\end{equation}
Then
\begin{equation*}
\mathcal T_{\mu\nu}^{(q)}
=
-\frac{2\lambda}{3}F_{\mu\nu}.
\end{equation*}
With
\begin{equation*}
F^2:=F_{\mu\nu}F^{\mu\nu},
\quad
F\cdot{}^\star F
:=
F_{\mu\nu}({}^\star F)^{\mu\nu},
\quad
w^2:=w_\mu w^\mu,
\end{equation*}
the characteristic determinant becomes
\begin{equation}
0
=
\left(
n^2-\frac{2\lambda^2}{9}F^2
\right)^2
+
\frac{4\lambda^4}{81}
\left(F\cdot{}^\star F\right)^2
+
\frac{16\lambda^2}{9}w^2 .
\label{eq:det-RC-tensor}
\end{equation}
Unlike the axial and trace sectors, this equation does not reduce to a
single-vector deformed cone. The quantities \(F_{\mu\nu}\) and \(w_\mu\)
depend on the characteristic covector \(n_\mu\) itself, so a universal
classification requires additional algebraic assumptions on
\(q^\lambda{}_{\mu\nu}\). No general causal classification is attempted here.

\subsubsection{Mixed axial and trace torsion}
\label{subsubsec:RC-mixed-axial-trace-VZ}

We now consider the mixed axial--trace sector,
\begin{equation*}
q^\lambda{}_{\mu\nu}=0,
\qquad
s^\mu\neq0,
\qquad
t^\mu\neq0 .
\end{equation*}
The determinant may be written directly in terms of the invariants
\begin{align}
X_s
&:=
n^2s^2-(n\cdot s)^2,
&
X_t
&:=
n^2t^2-(n\cdot t)^2,
\nonumber\\
X_{st}
&:=
n^2(s\cdot t)-(n\cdot s)(n\cdot t).
\label{eq:Xs-Xt-Xst}
\end{align}
Using \eqref{eq:T-RC-irrep}, one obtains
\begin{align}
\det\!\left(\Gamma^\mu_{(st)}n_\mu\right)
&=
\left[
n^2
+
\frac{4\lambda^2}{9}X_s
+
\frac{16\lambda^2}{81}X_t
\right]^2
\nonumber\\ &\quad+
\frac{256\lambda^4}{729}
\left(
X_{st}^2-X_sX_t
\right).
\label{eq:det-RC-mixed-st}
\end{align}

For the causal analysis it is cleaner to absorb the two numerical
coefficients into rescaled background vectors,
\begin{equation}
a^\mu:=\frac{2\lambda}{3}s^\mu,
\qquad
b^\mu:=\frac{4\lambda}{9}t^\mu .
\label{eq:a-b-scaled}
\end{equation}
Define
\begin{align}
&Y_a
:=
n^2a^2-(n\cdot a)^2,
\quad
Y_b
:=
n^2b^2-(n\cdot b)^2,
\nonumber\\
&Y_{ab}
:=
n^2(a\cdot b)-(n\cdot a)(n\cdot b).
\label{eq:Yab-def}
\end{align}
Then \eqref{eq:det-RC-mixed-st} becomes
\begin{equation}
\det\!\left(\Gamma^\mu_{(st)}n_\mu\right)
=
\left[
n^2+Y_a+Y_b
\right]^2
+
4\left(
Y_{ab}^2-Y_aY_b
\right).
\label{eq:det-RC-mixed-ab}
\end{equation}

If \(a^\mu\) and \(b^\mu\) are linearly dependent, the determinant reduces to
a one-vector equation of the same type as the pure axial or pure trace
sector. The previous one-vector analysis then excludes every nonzero
linearly dependent mixed background.

It remains to consider the case in which \(a^\mu\) and \(b^\mu\) span a
two-dimensional plane
\begin{equation*}
\Pi:=\operatorname{span}\{a,b\}.
\end{equation*}
The determinant \eqref{eq:det-RC-mixed-ab} depends on the pair
\((a,b)\) only through the trace and determinant of the Gram matrix induced
on \(\Pi\). Therefore an ordinary \(O(2)\) rotation of the pair
\((a,b)\) leaves the determinant unchanged. For a nondegenerate plane, we
may choose an orthogonal pair adapted to the Lorentz signature of \(\Pi\).

\begin{itemize}
\item \textit{Timelike plane.}
If \(\Pi\) is timelike, choose a local frame such that
\begin{equation*}
a^\mu=(A,0,0,0),
\qquad
b^\mu=(0,B,0,0),
\qquad
A>0.
\end{equation*}
For the spatial direction \(n_\mu=(n_0,n_1,0,0)\), the determinant becomes
\begin{equation}
\det\!\left(\Gamma^\mu_{(st)}n_\mu\right)
=
\left[
(1+A^2)n_1^2+(B^2-1)n_0^2
\right]^2 .
\label{eq:mixed-timelike-plane-det}
\end{equation}
If \(B^2<1\), the characteristic root is
\begin{equation*}
n_0^2
=
\frac{1+A^2}{1-B^2}\,n_1^2>n_1^2,
\end{equation*}
so timelike characteristic covectors occur. If \(B^2>1\), the same equation
gives \(n_0^2<0\), so the real-root test fails. If \(B^2=1\), choose
instead \(n_\mu=(n_0,0,n_2,0)\). Then one obtains the root
\begin{equation*}
n_0^2
=
\left(1+\frac{A^2}{4}\right)n_2^2>n_2^2 .
\end{equation*}
Thus every timelike mixed plane is inadmissible.

\item \textit{Spacelike plane.}
If \(\Pi\) is spacelike, choose an orthogonal frame and relabel the two
basis vectors so that
\begin{equation*}
a^\mu=(0,A,0,0),
\qquad
b^\mu=(0,0,B,0),
\qquad
B\ge A>0.
\end{equation*}
For \(n_\mu=(n_0,n_1,0,0)\), the determinant factorizes as
\begin{align}
\det\!\left(\Gamma^\mu_{(st)}n_\mu\right)
&=
\left[
(B-1)^2n_1^2+
\left(A^2-(B-1)^2\right)n_0^2
\right]
\nonumber\\
&\quad\times
\left[
(B+1)^2n_1^2+
\left(A^2-(B+1)^2\right)n_0^2
\right].
\label{eq:mixed-spacelike-plane-det}
\end{align}
Since \(B+1>A\), the second factor gives
\begin{equation*}
n_0^2
=
\frac{(B+1)^2}{(B+1)^2-A^2}\,n_1^2>n_1^2 .
\end{equation*}
Therefore every spacelike mixed plane admits timelike characteristic
covectors and is inadmissible.

\item \textit{Lightlike plane.}
Finally suppose that \(\Pi\) is degenerate but two-dimensional. Choose
\begin{equation*}
a^\mu=A(1,1,0,0),
\quad
b^\mu=(0,0,B,0),
\quad
A>0,
\quad
B>0.
\end{equation*}
For \(n_\mu=(n_0,n_1,0,0)\), the determinant factorizes as
\begin{align}
\det\!\left(\Gamma^\mu_{(st)}n_\mu\right)
&=
(n_0+n_1)^2
\nonumber\\
&\quad\times
\left[
A^2(n_0+n_1)+(B-1)^2(n_1-n_0)
\right]
\nonumber\\
&\quad\times
\left[
A^2(n_0+n_1)+(B+1)^2(n_1-n_0)
\right].
\label{eq:mixed-null-plane-det}
\end{align}
Apart from the null root \(n_0=-n_1\), the two remaining factors give roots
of the form
\begin{equation*}
\frac{n_0}{n_1}
=
-\frac{A^2+(B\mp1)^2}{A^2-(B\mp1)^2},
\end{equation*}
whenever the corresponding denominator is nonzero. Such roots have absolute
value larger than one and are therefore timelike. If the denominator in one
factor vanishes, the other factor gives the timelike root, except at the
isolated case \(B=1\), \(A=2\). In that case, taking instead
\(n_\mu=(n_0,0,n_2,0)\) gives a factor proportional to
\begin{equation*}
n_2^2+24n_0^2,
\end{equation*}
so the characteristic root is imaginary. Hence the degenerate null plane is
also inadmissible.
\end{itemize}

Combining the linearly dependent, timelike-plane, spacelike-plane, and
null-plane cases, we conclude that the mixed axial--trace sector is excluded
for every nonzero pair \((s^\mu,t^\mu)\). Therefore, within the
Riemann--Cartan axial--trace subsector, the only physically admissible
background is
\begin{equation*}
s^\mu=0,
\qquad
t^\mu=0 .
\end{equation*}

\subsection{Torsionless nonmetricity sectors}
\label{subsec:N-nonmetricity-VZ}

We next specialize to the torsionless nonmetric limit,
\begin{equation*}
S^\lambda{}_{\mu\nu}=0,
\qquad
Q_{\lambda\mu\nu}\neq0 .
\end{equation*}
Using \eqref{eq:E-lambda-sector-convention}, the principal coefficient
\eqref{eq:Gamma-N-E} becomes
\begin{equation}
\Gamma^\beta_{\rm N}
=
\gamma^\beta
-
\frac{\lambda}{6}
\left[
(\widehat Q_\lambda-Q_\lambda)\sigma^{\lambda\beta}
+
Q_\alpha{}^\beta{}_\lambda\sigma^{\alpha\lambda}
\right],
\label{eq:Gamma-N-lambda}
\end{equation}
where the two traces \(Q_\lambda\) and \(\widehat Q_\lambda\) are those
defined in Eq.~\eqref{eq:Q-traces-N}. The characteristic equation is
\begin{equation}
\det\!\left(\Gamma^\mu_{\rm N}n_\mu\right)=0 .
\label{eq:char-N-det}
\end{equation}
Comparing with the sector form \eqref{eq:M-vector-tensor-form}, one obtains
\begin{equation}
\mathcal T_{\mu\nu}^{\rm N}
=
-\frac{\lambda}{3}
(\widehat Q-Q)_{[\mu}n_{\nu]}
-
\frac{\lambda}{3}
n_\rho Q_{[\mu}{}^\rho{}_{\nu]} .
\label{eq:T-N-general}
\end{equation}

For later use, we recall the standard trace decomposition of nonmetricity in
four dimensions,
\begin{equation}
Q_{\lambda\mu\nu}
=
Q^{(W)}_{\lambda\mu\nu}
+
Q^{(t)}_{\lambda\mu\nu}
+
N_{\lambda\mu\nu},
\label{eq:N-trace-decomposition}
\end{equation}
where
\begin{equation}
Q^{(W)}_{\lambda\mu\nu}
=
\frac14 Q_\lambda g_{\mu\nu},
\label{eq:N-Weyl-piece}
\end{equation}
and
\begin{equation}
Q^{(t)}_{\lambda\mu\nu}
=
\frac49 g_{\lambda(\mu}\Lambda_{\nu)}
-
\frac19\Lambda_\lambda g_{\mu\nu},
\quad
\Lambda_\lambda
:=
\widehat Q_\lambda-\frac14Q_\lambda .
\label{eq:N-second-trace-piece}
\end{equation}
The tensor \(N_{\lambda\mu\nu}\) is tracefree,
\begin{equation}
N_{\lambda\mu}{}^\mu=0,
\qquad
N^\mu{}_{\mu\lambda}=0.
\label{eq:N-tracefree-conditions}
\end{equation}

\subsubsection{Pure Weyl nonmetricity}
\label{subsubsec:N-pure-Weyl-VZ}

For pure Weyl nonmetricity,
\begin{equation*}
Q_{\lambda\mu\nu}
=
Q^{(W)}_{\lambda\mu\nu}
=
\frac14 Q_\lambda g_{\mu\nu},
\quad
Q^{(t)}_{\lambda\mu\nu}=0,
\quad
N_{\lambda\mu\nu}=0.
\end{equation*}
Then
\begin{equation*}
\widehat Q_\lambda=\frac14Q_\lambda,
\qquad
Q_\alpha{}^\beta{}_\lambda
=
\frac14Q_\alpha\delta^\beta{}_\lambda .
\end{equation*}
Hence
\begin{align*}
(\widehat Q_\lambda-Q_\lambda)\sigma^{\lambda\beta}
+
Q_\alpha{}^\beta{}_\lambda\sigma^{\alpha\lambda}
&=
-\frac34Q_\lambda\sigma^{\lambda\beta}
+
\frac14Q_\alpha\delta^\beta{}_\lambda\sigma^{\alpha\lambda}
\nonumber\\
&=
-\frac12Q_\lambda\sigma^{\lambda\beta}.
\end{align*}
Therefore
\begin{equation}
\mathcal T_{\mu\nu}^{(W)}
=
\frac{\lambda}{6}Q_{[\mu}n_{\nu]}
=
\frac{\lambda}{12}
\left(
Q_\mu n_\nu-Q_\nu n_\mu
\right).
\label{eq:T-N-Weyl}
\end{equation}

Since \(\mathcal T_{\mu\nu}^{(W)}\) is a simple bivector, the determinant
reduces to
\begin{equation}
\det\!\left(\Gamma^\mu_{(W)}n_\mu\right)
=
\left[
n^2
+
\frac{\lambda^2}{144}
\left(
Q^2n^2-(Q\cdot n)^2
\right)
\right]^2 .
\label{eq:det-N-Weyl}
\end{equation}
Thus the characteristic equation in the pure Weyl sector is
\begin{equation}
\left(
1+\frac{\lambda^2}{144}Q^2
\right)n^2
=
\frac{\lambda^2}{144}(Q\cdot n)^2 .
\label{eq:char-N-Weyl}
\end{equation}

\begin{itemize}
\item \textit{Timelike Weyl vector.}
If \(Q^\mu\) is timelike, choose its rest frame,
\begin{equation*}
Q^\mu=(Q_0,0,0,0).
\end{equation*}
Then \eqref{eq:char-N-Weyl} gives
\begin{equation*}
n_0^2
=
\left(
1+\frac{\lambda^2Q_0^2}{144}
\right)|\vec n|^2 .
\end{equation*}
The roots are real, but timelike characteristic covectors occur for
\(Q_0\neq0\). Thus timelike Weyl nonmetricity is acausal by the
Velo--Zwanziger criterion.

\item \textit{Lightlike Weyl vector.}
If \(Q^\mu\) is lightlike, choose
\begin{equation*}
Q^\mu=Q_L(1,1,0,0).
\end{equation*}
For \(n_\mu=(n_0,n_1,0,0)\), Eq.~\eqref{eq:char-N-Weyl} becomes
\begin{equation*}
n_0^2-n_1^2
=
\alpha_Q(n_0+n_1)^2,
\qquad
\alpha_Q:=\frac{\lambda^2Q_L^2}{144}.
\end{equation*}
Apart from the null root \(n_0=-n_1\), the second root is
\begin{equation*}
\frac{n_0}{n_1}
=
\frac{1+\alpha_Q}{1-\alpha_Q},
\end{equation*}
for \(\alpha_Q\neq1\), and has absolute value larger than one. At the
exceptional value \(\alpha_Q=1\), transverse spatial directions fail the
real-root test. Hence lightlike Weyl nonmetricity is not admissible.

\item \textit{Spacelike Weyl vector.}
If \(Q^\mu\) is spacelike, choose
\begin{equation*}
Q^\mu=(0,Q_1,0,0).
\end{equation*}
Then \eqref{eq:char-N-Weyl} gives
\begin{equation*}
n_0^2
=
n_2^2+n_3^2
+
\frac{144}{144-\lambda^2Q_1^2}\,n_1^2 .
\end{equation*}
For
\begin{equation*}
144-\lambda^2Q_1^2>0,
\qquad
\text{i.e.}
\qquad
|Q_1|<\frac{12}{\lambda},
\end{equation*}
the roots are real for all spatial directions, but propagation parallel to
\(\vec Q\) gives timelike characteristic covectors whenever \(Q_1\neq0\).
For
\[
144-\lambda^2Q_1^2<0,
\]
the same parallel spatial direction gives imaginary \(n_0\). The equality
case is degenerate and is not an admissible hyperbolic regime.
\end{itemize}

Therefore every nonzero pure Weyl nonmetricity background is excluded by the
Velo--Zwanziger front-causality criterion.

\subsubsection{Pure second-trace nonmetricity}
\label{subsubsec:N-pure-second-trace-VZ}

For pure second-trace nonmetricity, we set
\begin{align*}
&Q_{\lambda\mu\nu}
=
Q^{(t)}_{\lambda\mu\nu}
=
\frac49 g_{\lambda(\mu}\Lambda_{\nu)}
-
\frac19\Lambda_\lambda g_{\mu\nu},
\\\
&Q^{(W)}_{\lambda\mu\nu}=0,
\qquad
N_{\lambda\mu\nu}=0.
\end{align*}
Then
\begin{equation*}
Q_\lambda=0,
\qquad
\widehat Q_\lambda=\Lambda_\lambda,
\end{equation*}
and
\begin{equation*}
Q_\alpha{}^\beta{}_\lambda
=
\frac29
\left(
\delta^\beta{}_\alpha\Lambda_\lambda
+
g_{\alpha\lambda}\Lambda^\beta
\right)
-
\frac19\Lambda_\alpha\delta^\beta{}_\lambda .
\end{equation*}
Thus
\begin{align*}
(\widehat Q_\lambda-Q_\lambda)\sigma^{\lambda\beta}
+
Q_\alpha{}^\beta{}_\lambda\sigma^{\alpha\lambda}
&=
\Lambda_\lambda\sigma^{\lambda\beta}
+
\frac29\delta^\beta{}_\alpha\Lambda_\lambda\sigma^{\alpha\lambda}
\nonumber\\&\quad+
\frac29g_{\alpha\lambda}\Lambda^\beta\sigma^{\alpha\lambda}
-\frac19\Lambda_\alpha\delta^\beta{}_\lambda\sigma^{\alpha\lambda}
\nonumber\\
&=
\Lambda_\lambda\sigma^{\lambda\beta}
-
\frac29\Lambda_\lambda\sigma^{\lambda\beta}
-
\frac19\Lambda_\lambda\sigma^{\lambda\beta}
\nonumber\\
&=
\frac23\Lambda_\lambda\sigma^{\lambda\beta}.
\end{align*}
Therefore
\begin{equation}
\mathcal T_{\mu\nu}^{(t)}
=
-\frac{2\lambda}{9}\Lambda_{[\mu}n_{\nu]}
=
-\frac{\lambda}{9}
\left(
\Lambda_\mu n_\nu-\Lambda_\nu n_\mu
\right).
\label{eq:T-N-second-trace}
\end{equation}

Again the tensor coefficient is a simple bivector, so the determinant becomes
\begin{equation}
\det\!\left(\Gamma^\mu_{(t)}n_\mu\right)
=
\left[
n^2
+
\frac{\lambda^2}{81}
\left(
\Lambda^2n^2-(\Lambda\cdot n)^2
\right)
\right]^2 .
\label{eq:det-N-second-trace}
\end{equation}
Hence the characteristic equation in the pure second-trace sector is
\begin{equation}
\left(
1+\frac{\lambda^2}{81}\Lambda^2
\right)n^2
=
\frac{\lambda^2}{81}(\Lambda\cdot n)^2 .
\label{eq:char-N-second-trace}
\end{equation}

The causal analysis is identical to the pure Weyl case, with the replacement
\begin{equation*}
Q_\mu\longmapsto\Lambda_\mu,
\qquad
\frac{\lambda^2}{144}\longmapsto\frac{\lambda^2}{81}.
\end{equation*}
Explicitly, for a timelike second-trace vector
\(\Lambda^\mu=(\Lambda_0,0,0,0)\), one obtains
\begin{equation*}
n_0^2
=
\left(
1+\frac{\lambda^2\Lambda_0^2}{81}
\right)|\vec n|^2 ,
\end{equation*}
so timelike characteristic covectors occur whenever \(\Lambda_0\neq0\). For
a lightlike vector \(\Lambda^\mu=\Lambda_L(1,1,0,0)\), the non-null root is
\begin{equation*}
\frac{n_0}{n_1}
=
\frac{1+\alpha_\Lambda}{1-\alpha_\Lambda},
\qquad
\alpha_\Lambda:=\frac{\lambda^2\Lambda_L^2}{81},
\end{equation*}
and is timelike for \(\alpha_\Lambda\neq1\), while the exceptional value
fails the real-root test in transverse directions. Finally, for a spacelike
vector \(\Lambda^\mu=(0,\Lambda_1,0,0)\), one finds
\begin{equation*}
n_0^2
=
n_2^2+n_3^2
+
\frac{81}{81-\lambda^2\Lambda_1^2}\,n_1^2 .
\end{equation*}
Thus the regime
\begin{equation*}
|\Lambda_1|<\frac{9}{\lambda}
\end{equation*}
has real roots but timelike characteristic covectors for propagation parallel
to \(\vec\Lambda\), whereas
\[
|\Lambda_1|>\frac{9}{\lambda}
\]
violates the real-root criterion in the same direction. The equality case is
degenerate.

Therefore every nonzero pure second-trace nonmetricity background is excluded
by the Velo--Zwanziger front-causality criterion.

\subsubsection{Combined trace-vector sector}
\label{subsubsec:N-combined-trace-VZ}

Although the pure Weyl and pure second-trace sectors have different numerical
coefficients when written in terms of \(Q_\mu\) and \(\Lambda_\mu\), their
combination is not a genuinely two-vector characteristic problem. For the
trace-vector sector
\begin{equation*}
Q_{\lambda\mu\nu}
=
Q^{(W)}_{\lambda\mu\nu}
+
Q^{(t)}_{\lambda\mu\nu},
\qquad
N_{\lambda\mu\nu}=0,
\end{equation*}
define the effective trace vector
\begin{equation}
H_\mu
:=
\widehat Q_\mu-Q_\mu
=
\Lambda_\mu-\frac34Q_\mu .
\label{eq:H-effective-nonmetric-trace}
\end{equation}
Combining the Weyl and second-trace contractions gives
\begin{equation*}
(\widehat Q_\lambda-Q_\lambda)\sigma^{\lambda\beta}
+
Q_\alpha{}^\beta{}_\lambda\sigma^{\alpha\lambda}
=
\frac23H_\lambda\sigma^{\lambda\beta}.
\end{equation*}
Therefore
\begin{equation}
\mathcal T_{\mu\nu}^{\rm tr}
=
-\frac{2\lambda}{9}H_{[\mu}n_{\nu]} .
\label{eq:T-N-trace-vector}
\end{equation}
The determinant is
\begin{equation}
\det\!\left(\Gamma^\mu_{\rm tr}n_\mu\right)
=
\left[
n^2
+
\frac{\lambda^2}{81}
\left(
H^2n^2-(H\cdot n)^2
\right)
\right]^2 .
\label{eq:det-N-trace-vector}
\end{equation}
Thus
\begin{equation*}
H_\mu\neq0
\end{equation*}
is excluded by the same one-vector Velo--Zwanziger argument as above, while
\begin{equation}
H_\mu=0,
\qquad
\text{i.e.}
\qquad
\widehat Q_\mu=Q_\mu,
\label{eq:H-zero-condition}
\end{equation}
leaves the principal characteristic cone undeformed:
\begin{equation*}
\det\!\left(\Gamma^\mu_{\rm tr}n_\mu\right)
=
(n^2)^2 .
\end{equation*}

\subsubsection{Pure tracefree nonmetricity}
\label{subsubsec:N-tracefree-VZ}

We finally consider a pure tracefree nonmetricity background,
\begin{equation*}
Q_{\lambda\mu\nu}=N_{\lambda\mu\nu},
\qquad
Q_\lambda=0,
\qquad
\widehat Q_\lambda=0.
\end{equation*}
Then we have
\begin{equation}
\mathcal T_{\mu\nu}^{(N)}
=
-\frac{\lambda}{3}
n_\rho N_{[\mu}{}^\rho{}_{\nu]} .
\label{eq:T-N-tracefree}
\end{equation}

It is useful to define
\begin{equation}
F_{\mu\nu}
:=
n_\rho N_{[\mu}{}^\rho{}_{\nu]},
\qquad
w_\mu
:=
n^\rho F_{\rho\mu}.
\label{eq:F-w-N-def}
\end{equation}
Then
\begin{equation*}
\mathcal T_{\mu\nu}^{(N)}
=
-\frac{\lambda}{3}F_{\mu\nu}.
\end{equation*}
Substitution into the general vector--tensor determinant gives the diagnostic
formula
\begin{equation}
\det\!\left(\Gamma^\mu_{(N)}n_\mu\right)
=
\left(
n^2-\frac{\lambda^2}{18}F^2
\right)^2
+
\frac{\lambda^4}{324}
\left(
F\cdot{}^\star F
\right)^2
+
\frac{4\lambda^2}{9}w^2 ,
\label{eq:det-N-tracefree}
\end{equation}
where
\begin{equation*}
F^2:=F_{\mu\nu}F^{\mu\nu},
\quad
F\cdot{}^\star F:=F_{\mu\nu}({}^\star F)^{\mu\nu},
\quad
w^2:=w_\mu w^\mu .
\end{equation*}
The characteristic equation is obtained by setting
\eqref{eq:det-N-tracefree} to zero. Yet, unlike the Weyl and second-trace sectors, this expression does not reduce to
a single-vector deformed cone. The quantities \(F_{\mu\nu}\) and \(w_\mu\)
depend on the characteristic covector \(n_\mu\) itself, and \(w^2\) is a
Lorentzian square. Therefore no universal causal classification can be attempted without further algebraic
assumptions on \(N_{\lambda\mu\nu}\).

There is, however, one simple subcase. If the pure tracefree nonmetricity is
totally symmetric,
\begin{equation*}
N_{\lambda\mu\nu}=N_{(\lambda\mu\nu)},
\qquad
N_{\lambda\mu}{}^\mu=0,
\end{equation*}
then
\begin{equation*}
N_{[\mu}{}^\rho{}_{\nu]}=0,
\qquad
F_{\mu\nu}=0,
\end{equation*}
and hence
\begin{equation}
\Gamma^\beta_{(N)}=\gamma^\beta,
\qquad
\det\!\left(\Gamma^\mu_{(N)}n_\mu\right)
=
(n^2)^2 .
\end{equation}
Thus the totally symmetric tracefree subcase is invisible to the principal
symbol and is not excluded by the Velo--Zwanziger front-causality criterion
at this level.

The main results of the Velo--Zwanziger analyses of this section are summarized in Table \eqref{tab:VZ-summary}.

\begin{table*}[t]
\caption{
Summary of the Velo--Zwanziger test for the analyzed torsion and nonmetricity sectors.
``Fails'' means that every nonzero background in the sector either admits timelike
characteristic covectors or fails the real-root test. ``Passes'' means only that the
sector is not immediately excluded by the Velo--Zwanziger criterion.
}
\label{tab:VZ-summary}
\centering
\renewcommand{\arraystretch}{1.35}
\begin{ruledtabular}
\begin{tabular}{lll}
Name & Object & V--Z Test \\
\hline
Pure axial torsion
&
\(s^\mu\)
&
Fails for \(s^\mu\neq0\).
\\
Pure trace torsion
&
\(t^\mu\)
&
Fails for \(t^\mu\neq0\).
\\
Mixed axial--trace torsion
&
\((s^\mu,t^\mu)\)
&
Fails for \((s^\mu,t^\mu)\neq(0,0)\).
\\
Pure tensor torsion
&
\(q^\lambda{}_{\mu\nu}\)
&
Inconclusive in general.
\\
Pure Weyl nonmetricity
&
\(Q_\mu\)
&
Fails for \(Q_\mu\neq0\).
\\
Pure second-trace nonmetricity
&
\(\Lambda_\mu\)
&
Fails for \(\Lambda_\mu\neq0\).
\\
Combined trace-vector nonmetricity
&
\(H_\mu=\widehat Q_\mu-Q_\mu\)
&
Fails for \(H_\mu\neq0\), passes if \(H_\mu=0\).
\\
Pure tracefree nonmetricity
&
\(N_{\lambda\mu\nu}\)
&
Inconclusive in general, passes if \(N_{\lambda\mu\nu}=N_{(\lambda\mu\nu)}\).
\end{tabular}
\end{ruledtabular}
\end{table*}

\section{Discussion and Conclusions}
\label{sec:discussion-conclusions}

The central question addressed here was not whether a known quadratic surface term can render the Dirac Lagrangian nondegenerate, but which such terms are allowed once regularity and symmetry are imposed simultaneously. Starting from the singular covariant Legendre map of the free Dirac theory, we derived the Hessian-relevant null deformations that leave the free flat-space Dirac equation unchanged while making the derivative--momentum map locally invertible:
\[
\Delta\mathcal L=\frac{i}{3}
(\partial_\mu\bar\psi)
\left(
\ell\sigma^{\mu\nu}
+
\ell_5{}^\star\sigma^{\mu\nu}
\right)
(\partial_\nu\psi),
\qquad
\ell^2+\ell_5^2\neq0 .
\]
Thus the Gasiorowicz term proportional to \(\ell\) is recovered as the
parity-even member \(\ell_5=0\) of a two-parameter proper-Lorentz-covariant
null class. If parity or CP invariance is imposed, this class reduces to the
representative used in \cite{struckmeier2024pauli}.

After minimal \(U(1)\) gauging, the free null term is no longer a divergence,
since the gauge-covariant derivatives do not commute. The parity-even
parameter \(\ell\) produces the usual Pauli-type magnetic-dipole operator,
while the parity-odd parameter \(\ell_5\) produces its electric-dipole-type
dual. The additional Maxwell source is a dipole current of superpotential
form: for localized fields it carries no net \(U(1)\) charge, but modifies
the local polarization--magnetization structure of the source. In this sense,
Legendre regularity gives rise to precisely the Pauli-type dipole couplings.

These dipole descendants also make the regularization lengths
experimentally constrainable. For each fermion species \(f\), magnetic
and electric dipole measurements bound the same pair
\((\ell_f,\ell_{5f})\) that defines the selected free representative. For
the electron, the present magnetic-moment uncertainty gives an
experimental sensitivity
\[
        |\ell_e|\lesssim 3.8\times10^{-10}\,{\rm GeV}^{-1},
\]
while the electron EDM bound gives
\[
        |\ell_{5e}|\lesssim 6.2\times10^{-16}\,{\rm GeV}^{-1}.
\]
These estimates show that the parity-odd branch is already constrained
very strongly by EDM searches, while the parity-even branch is probed by
precision magnetic-moment measurements.

The value of the coupling parameters for different flavors of particles may
additionally be constrained by the symmetries of electroweak theory. In the
unbroken electroweak theory, the left-handed charged leptons and their
corresponding left-handed neutrinos form $SU(2)_L$ doublets,
$L_i=(\nu_{iL},\ell_{iL})^T$, with $i=e,\mu,\tau$. In order for this symmetry
to remain intact, the two components of a given doublet may not be assigned
independent regularization parameters. Thus $\nu_{eL}$ and $e_L$,
$\nu_{\mu L}$ and $\mu_L$, and $\nu_{\tau L}$ and $\tau_L$ must each share the
same parameter pair within the corresponding left-handed doublet. The
right-handed charged leptons are $SU(2)_L$ singlets and are therefore not tied
to these doublet parameters by $SU(2)_L$ alone. If right-handed neutrinos are
included, they are likewise electroweak singlets and may carry independent
parameters \cite{denk2019weak}.

The same mechanism persists after spacetime gauging. In the Lorentz-spinor
metric-affine prescription used here, the regularized representative induces
spin-curvature, torsion, and nonmetricity contributions to the Dirac
operator. In the Riemann--Cartan limit and for \(\ell_5=0\), the purely
axial torsion sector reproduces, at the level of operator structure, the
Pauli-type spin-torsion and spin-curvature terms postulated by Obukhov
\textit{et al.} in a nonminimal coupling \cite{obukhov2014spin}, up to a choice of coefficients.

The characteristic analysis gives a more restrictive conclusion than \cite{struckmeier2024pauli}. The determinant depends on the pair
\((\ell,\ell_5)\) only through
\[
  \lambda=\sqrt{\ell^2+\ell_5^2},
\]
because the principal symbol is related to its \(\ell_5=0\) term
by an exact double-sided chiral transformation with unit determinant. Thus
\(\ell_5\) remains visible in lower-order parity-odd terms, but not in the
front cone. The Velo--Zwanziger results are summarized in
Table~\ref{tab:VZ-summary}. In particular, the pure axial torsion, pure trace
torsion, and mixed axial--trace torsion sectors all fail for every nonzero
background. Hence the apparent causal windows initially found in \cite{struckmeier2024pauli} for the axial and vector torsion cases are
eliminated by this analysis. Within the
Riemann--Cartan axial--trace subsector of the induced operator, the
Velo--Zwanziger admissible point is \(s^\mu=t^\mu=0\) -- this would be particularly troubling for theories with propagating torsion such as the one discussed in \cite{denk2023torsion}. The tensor irreducible
torsion component is not excluded in general, as its determinant depends on
covector-dependent contractions and requires further algebraic assumptions.

The torsionless nonmetricity sectors show an analogous pattern. Pure Weyl
and pure second-trace nonmetricity each reduce to a one-vector deformation of
the light cone and fail for every nonzero vector. In the combined trace
sector, only the effective vector $H_\mu=\widehat Q_\mu-Q_\mu$
enters the principal symbol. \(H_\mu\neq0\) fails, while \(H_\mu=0\) leaves
the principal cone undeformed. The tracefree nonmetricity sector, like tensor
torsion, is not universally classified. A totally symmetric tracefree
nonmetricity tensor is invisible to the principal symbol and is therefore
not excluded at this level.

We recall, however, that in the Levi--Civita limit all induced modifications of
the principal part vanish, while the Legendre map remains regular for
$\ell^2+\ell_5^2\neq0$. Only lower-order scalar and pseudoscalar
curvature-dependent mass terms survive, and the characteristic cone
remains the standard metric null cone. The regularized theory is
therefore not excluded by the Velo--Zwanziger criterion in this sector.

Our torsion results may also be compared with the Velo--Zwanziger analysis of
\cite{fabbri2019restrictions}. There, derivative couplings of spinors to
the axial and trace torsion vectors are introduced phenomenologically and are
then restricted by the Velo--Zwanziger criterion. The tensor irreducible part
of torsion is not included in the subsequent spinor analysis, since it is
argued there not to define a consistent propagating field. In the present
work, however, this component is not excluded a priori, though one cannot generally classify it.

The full determinant in \cite{fabbri2019restrictions} is approximated by
weak-background and hierarchy assumptions. After these reductions, the extra
scalar and pseudoscalar derivative couplings appear only through effective
combinations of coupling constants. The resulting reduced cone is therefore
structurally the same as the weak-background limit of our pure axial and pure
trace determinants. Our analysis differs in that the Legendre-induced
principal symbol is treated exactly in the pure axial, pure trace, and mixed
axial--trace sectors. It also applies the same Velo--Zwanziger diagnostic to
selected torsionless nonmetricity sectors, which are not treated in \cite{fabbri2019restrictions}.

The static form of the resulting Dirac equation is also naturally comparable
with the fermion sector of the Standard-Model Extension. The SME is an
effective field theory in which fixed background coefficients control
Lorentz- and CPT-violating operators. Its physical interpretation is
therefore different from ours. Nevertheless, once the metric-affine
background is frozen locally, the Dirac equation derived here has the same
kind of Clifford-valued operator structure: scalar, pseudoscalar, vector,
axial-vector, and tensor terms may appear in the lower-order operator, while
the derivative part is modified by tensor-valued coefficients built from
torsion or nonmetricity. In this sense the formalism resembles a restricted,
geometrically induced SME-type fermion sector, with the crucial difference
that the coefficients are not arbitrary EFT parameters but are fixed by
\(\lambda\) and by the local non-Riemannian geometry.

This comparison is consistent with earlier SME analyses of torsion. In
particular, constant torsion backgrounds have been interpreted as effective
sources of local Lorentz violation and matched to SME fermion coefficients
\cite{shapiro2002physical,kostelecky2008constraints}. The present construction
differs in origin: the torsion and nonmetricity couplings are not postulated as or restricted
to independent Lorentz-violating background operators, they arise from
spacetime gauging of the Legendre-regular Dirac Lagrangian.

The construction also has a precise interpretation in the language of
Lepagean regularization. In \cite{krupkova2001legendre}, singular
first-order Lagrangians are regularized by replacing the
Poincar\'e--Cartan form \(\theta_L\) with a Lepagean equivalent
\[
  \rho=\theta_L+\nu ,
\]
where \(\nu\) is an at least \(2\)-contact correction. In the
Hamilton \(p_2\)-regularizations relevant here, the chosen correction is
represented as the \(2\)-contact part of an associated \(n\)-form
\(\eta\). This changes the Hamiltonian side of the theory, but not the
Euler--Lagrange equations. The horizontal part \(h(\eta)\) associated
with this construction is called the satellite, and the modified
Lagrangian
\[
  \bar{L}=L-h(\eta)
\]
is the corresponding so-called dedonderization of \(L\). When the
associated form \(\eta\) is closed, \(\bar L\) is variationally
equivalent to \(L\) and regular in the ordinary Hamilton--De Donder
sense. The generalized Hamilton equations based on \(\rho\) then
coincide with the usual Hamilton--De Donder equations of \(\bar L\).

The free Dirac term constructed here fits this pattern. Up to the sign
convention in the definition of \(h(\eta)\), the derivative-quadratic null
density
\[
  \Delta\mathcal L
  =
  \frac{i}{3}
  (\partial_\mu\bar\psi)
  \mathcal E^{\mu\nu}
  (\partial_\nu\psi)
\]
is a satellite term, and \(L_D+\Delta L\) is a dedonderized Dirac
Lagrangian. The general Lepagean construction allows a large regularizing
freedom, and the Dirac symmetries used here reduce that freedom to the
spinorial bivector
\[
  \mathcal E^{\mu\nu}
  =
  \ell\sigma^{\mu\nu}
  +
  \ell_5{}^\star\sigma^{\mu\nu}.
\]
Thus the Lepagean viewpoint explains the Hamiltonian regularity mechanism,
while the present analysis identifies the physically admissible Dirac
satellite and studies what it produces after gauging. Broad Lepagean
constructions are developed for example in \cite{krupkova2009lepage}.

This distinction also clarifies the non-Riemannian principal-symbol
obstructions found above. There are two possible uses of the same algebraic bivector
\(\mathcal E^{\mu\nu}\). The route followed in the main text, and in
\cite{struckmeier2024pauli}, is to take the dedonderized horizontal
Lagrangian \(L_D+\Delta L\) as the matter Lagrangian and then apply minimal
coupling. Since covariant derivatives do not commute, the satellite is no
longer variationally trivial after gauging. It produces the Pauli-type
electromagnetic, spin-torsion, spin-curvature, and nonmetricity terms
analyzed here. The Velo--Zwanziger restrictions derived above apply to this
physical-gauging interpretation.

A genuinely Lepagean alternative would keep the regularizing correction as
part of the Lepage form, rather than first converting it into a horizontal
satellite Lagrangian and then gauging that representative. In such a route,
one would gauge the underlying Dirac matter at the level of the Lepage equivalent, and the
regularizing term should be retained as a contact correction to the Hamilton form.
Because the correction remains invariant, it does
not acquire the curvature-commutator terms that arise from gauging the
horizontal satellite, thus not generating new physical coupling. This achieves a Legendre-regular gauged theory without altering the standard gauged field equations.  

This suggests a possible route for covariant Hamiltonian formulations
of fermions in Riemann--Cartan or metric-affine gravity. If the goal is to obtain a regular Hamilton--De Donder
fermion theory, a Lepagean regularization may provide a path to do so that avoids the
propagation pathologies found in the non-Riemannian sectors -- at the cost of losing the induced dipole moments. 

\section*{Acknowledgments}

M.N.M. gratefully acknowledges support from the Margarete und Herbert Puschmann-Stiftung. D.V. and V.D. gratefully acknowledge support from the Fueck-Stiftung.

\appendix
\section{Geometric intuition for fiberwise regularity}
\label{app:visualizing-regularity}

The condition in Eq.~\eqref{eq:regularity-condition} can be viewed geometrically as
a curvature condition on the graph of the Lagrangian over the derivative
fiber. At fixed spacetime point and fixed field value, collect the derivative
variables as
\begin{equation}
  v^I:=v^A{}_\mu,
  \qquad
  I=(A,\mu).
\end{equation}
The relevant graph is
\begin{equation}
  z=\mathcal L(v^1,\ldots,v^N),
\end{equation}
where only the derivative variables are varied. This is not the graph of
\(\mathcal L\) over all variables, but only over the derivative fiber.

For this \(N\)-dimensional graph, the determinant-type curvature is the
Gauss--Kronecker curvature,
\begin{equation}
  K_{\mathrm{GK}}
  =
  \frac{\det W}
       {\left(1+|\nabla_v\mathcal L|^2\right)^{(N+2)/2}},
  \label{eq:gauss-kronecker-curvature}
\end{equation}
where \(W\) is the fiber Hessian of Eq.~\eqref{eq:fiber-hessian}
\cite{Chen2012}. Hence
\begin{equation}
  K_{\mathrm{GK}}\neq0
  \qquad\Longleftrightarrow\qquad
  \det W\neq0 .
\end{equation}
Thus, in this graph picture, Legendre regularity means nondegenerate
curvature in the derivative directions. The slopes of the graph are the
polymomenta, and the curvature measures whether these slopes change
nondegenerately when the derivatives are varied.

To make this visible, consider the simple two-dimensional case. Denote two
derivative coordinates by
\begin{equation}
  v_1:=v^{A_1}{}_{\mu_1},
  \qquad
  v_2:=v^{A_2}{}_{\mu_2}.
\end{equation}
The fiber Hessian is
\begin{equation}
  W_{(1,2)}
  =
  \begin{pmatrix}
    \displaystyle
    \frac{\partial^2\mathcal L}{\partial v_1^2}
    &
    \displaystyle
    \frac{\partial^2\mathcal L}{\partial v_1\partial v_2}
    \\[1em]
    \displaystyle
    \frac{\partial^2\mathcal L}{\partial v_2\partial v_1}
    &
    \displaystyle
    \frac{\partial^2\mathcal L}{\partial v_2^2}
  \end{pmatrix},
\end{equation}
where partial differentiation already implies that all other variables are
kept fixed.

For the graph \(z=\mathcal L(v_1,v_2)\), the Gaussian curvature is
\begin{equation}
  K_{(1,2)}
  =
  \frac{\det W_{(1,2)}}
       {\left(1+(\mathcal L_{,1})^{\,2}
                +(\mathcal L_{,2})^{\,2}\right)^2},
  \label{eq:gaussian-curvature-section}
\end{equation}
where \(\mathcal L_{,i}:=\partial\mathcal L/\partial v_i\)
\cite{doCarmo1976,Spivak1979}. Hence, in a two-dimensional derivative
section,
\begin{equation}
  K_{(1,2)}\neq0
  \qquad\Longleftrightarrow\qquad
  \det W_{(1,2)}\neq0 .
\end{equation}

\paragraph*{Dirac-like affine section.}
The closest simple model of the standard Dirac derivative dependence is
\begin{equation}
  \mathcal L_{\mathrm{aff}}
  =
  a_1v_1+a_2v_2+c .
\end{equation}
Then
\begin{equation}
  W_{(1,2)}=0.
\end{equation}
The graph is a plane, and the momenta are
\begin{equation}
  p_1=a_1,
  \qquad
  p_2=a_2.
\end{equation}
Thus neither derivative can be recovered from the momenta. This is the visual
analogue of a first-order Lagrangian that is linear in the derivative
variables.

\paragraph*{Curved but singular section.}
A section may be curved and still singular. For example,
\begin{equation}
  \mathcal L_{\mathrm{cyl}}
  =
  a_1v_1+a_2v_2+c+\frac{\alpha}{2}v_1^2,
  \qquad
  \alpha\neq0 ,
\end{equation}
has
\begin{equation}
  W_{(1,2)}
  =
  \begin{pmatrix}
    \alpha & 0\\
    0 & 0
  \end{pmatrix},
  \qquad
  \det W_{(1,2)}=0 .
\end{equation}
The graph is a parabolic cylinder. It bends in the \(v_1\)-direction but is
flat in the \(v_2\)-direction. Thus visible curvature in one direction is not
enough: regularity requires no null direction in the fiber Hessian.

\paragraph*{Regular saddle section.}
A regular section need not be convex. Consider
\begin{equation}
  \mathcal L_{\mathrm{sad}}
  =
  a_1v_1+a_2v_2+c+\lambda v_1v_2,
  \qquad
  \lambda\neq0 .
\end{equation}
Then
\begin{equation}
  W_{(1,2)}
  =
  \begin{pmatrix}
    0 & \lambda\\
    \lambda & 0
  \end{pmatrix},
  \qquad
  \det W_{(1,2)}=-\lambda^2\neq0 .
\end{equation}
The graph is a saddle. The Hessian has eigenvalues of opposite sign, so it is
indefinite but nondegenerate.

\paragraph*{Higher-order section.}
Higher-order derivative terms may affect regularity away from a point, but
they cannot repair a degenerate Hessian at that point. For instance,
\begin{equation}
  \mathcal L_{\mathrm{ho}}
  =
  a_1v_1+a_2v_2+c
  +
  \frac{\kappa}{4}\left(v_1^4+v_2^4\right),
  \qquad
  \kappa\neq0 ,
\end{equation}
gives
\begin{equation}
  W_{(1,2)}
  =
  \begin{pmatrix}
    3\kappa v_1^2 & 0\\
    0 & 3\kappa v_2^2
  \end{pmatrix},
  \qquad
  \det W_{(1,2)}
  =
  9\kappa^2v_1^2v_2^2 .
\end{equation}
This determinant can be nonzero away from the coordinate axes, but at the
origin
\begin{equation}
  W_{(1,2)}(0,0)=0.
\end{equation}
Such a term therefore does not regularize the Hessian pointwise.

\section{Obukhov \textit{et al.} (2014): Nonminimal Fermionic Couplings in Poincar\'e Gauge Gravity}
\label{app-obkhv}

Obukhov \textit{et al.}~\cite{obukhov2014spin} work in the
Riemann--Cartan spacetime of Poincar\'e gauge gravity, whose covariant
gravitational field strengths are the torsion \(S_{ij}{}^\alpha\) and the
curvature \(R_{ij}{}^{\alpha\beta}\). Their nonminimal Pauli ansatz uses
the translational and Lorentz gravitational moment two-forms obtained
from the Gordon decomposition of the Dirac energy--momentum and spin
currents \cite{Hehl:1997currents}. Besides the convective parts, those
currents contain polarizational total-divergence contributions, which
define gravitational dipole moment densities associated with translations
and local Lorentz rotations.

For ease of comparison, the formulas of
Ref.~\cite{obukhov2014spin} are expressed throughout this appendix in
the notation, sign conventions, and tensor normalizations of the present
paper. Any corresponding normalization factors are absorbed into the
phenomenological coefficients \(\rho'\) and \(\tau'\), whose symbols are
retained.

The relevant Dirac dipole moments are
\begin{align}
M_\alpha{}^{ij}
&:=
\frac{i}{4m}
\left(
\bar\psi\sigma^{ij}D_\alpha\psi
-
(D_\alpha\bar\psi)\sigma^{ij}\psi
\right),
\label{eq:obukhov-translational-moment}
\\
M_{\alpha\beta}{}^{ij}
&:=
\frac{1}{8m}
\bar\psi
\left(
\sigma^{ij}\sigma_{\alpha\beta}
+
\sigma_{\alpha\beta}\sigma^{ij}
\right)\psi
\nonumber\\
&=
\frac{1}{8m}
\bar\psi\{\sigma^{ij},\sigma_{\alpha\beta}\}\psi .
\label{eq:obukhov-lorentz-moment}
\end{align}

The nonminimal modification has the symbolic Pauli form
\begin{equation}
\Delta\mathcal L_{\rm nm}
\sim
S_{ij}{}^\alpha M_\alpha{}^{ij}
+
R_{ij}{}^{\alpha\beta}M_{\alpha\beta}{}^{ij},
\end{equation}
which may be written as
\begin{align}
\Delta\mathcal L_{\rm nm}
&=
\frac{\rho'}{2}S_{ij}{}^\alpha
\left(
\bar\psi\sigma^{ij}D_\alpha\psi
-
(D_\alpha\bar\psi)\sigma^{ij}\psi
\right)
\nonumber\\
&\quad
-
\tau'\bar\psi
\left(
R+iP\gamma_5
\right)
\psi .
\label{eq:obukhov-nonminimal}
\end{align}
Here \(\rho'\) and \(\tau'\) are phenomenological coupling constants, and
\(R\) and \(P\) denote the curvature scalar and pseudoscalar. The sign of
the \(P\)-term is written in the present convention.

Varying \eqref{eq:obukhov-nonminimal} with respect to \(\bar\psi\) gives
the additional terms
\begin{align}
\Delta_\rho
&=
\rho'
\left[
S_{ij}{}^\alpha\sigma^{ij}D_\alpha\psi
+
\frac{1}{2}
(D_\alpha S_{ij}{}^\alpha)\sigma^{ij}\psi
\right],
\label{eq:obukhov-Delta-rho}
\\
\Delta_\tau
&=
-\tau'
\left(
R+iP\gamma_5
\right)\psi ,
\label{eq:obukhov-Delta-tau}
\end{align}
where \(D_\alpha\sigma^{ij}=0\) has been used. Thus the Obukhov-modified
Dirac equation is
\begin{equation}
\left(
i\gamma^\alpha D_\alpha-m
\right)\psi
+
\Delta_\rho
+
\Delta_\tau
=
0 .
\label{eq:obukhovFE-compact}
\end{equation}
Equivalently,
\begin{align}
0
&=
\left(
i\gamma^\alpha D_\alpha-m
\right)\psi
\nonumber\\
&\quad
+
\rho'
\left[
S_{ij}{}^\alpha\sigma^{ij}D_\alpha\psi
+
\frac{1}{2}
(D_\alpha S_{ij}{}^\alpha)\sigma^{ij}\psi
\right]
\nonumber\\
&\quad
-
\tau'
\left(
R+iP\gamma_5
\right)\psi .
\label{eq:obukhovFE}
\end{align}

On the other hand, in the \(\ell_5=0\) axial-torsion limit,
\eqref{eq:GenDirac-RC-E} gives
\begin{align}
0
&=
\left(
i\gamma^\alpha D_\alpha-m
\right)\psi
-
\frac{i\ell}{3}
S_{ij}{}^\alpha\sigma^{ij}D_\alpha\psi
\nonumber\\
&\quad
-
\frac{\ell}{12}
\left(
R+iP\gamma_5
+
2i\sigma^{ij}R_{[ij]}
\right)\psi .
\label{eq:ccgg-comp1}
\end{align}
For axial torsion,
\begin{equation}
R_{[ij]}=S^\alpha{}_{ij;\alpha}.
\label{eq:axialRiccit}
\end{equation}
Substituting \eqref{eq:axialRiccit} into \eqref{eq:ccgg-comp1} and
rearranging yields
\begin{align}
0
&=
\left(
i\gamma^\alpha D_\alpha-m
\right)\psi
\nonumber\\
&\quad
-
\frac{i\ell}{3}
\left[
S_{ij}{}^\alpha\sigma^{ij}D_\alpha\psi
+
\frac{1}{2}
(D_\alpha S_{ij}{}^\alpha)\sigma^{ij}\psi
\right]
\nonumber\\
&\quad
-
\frac{\ell}{12}
\left(
R+iP\gamma_5
\right)\psi .
\label{eq:ccgg-comp2}
\end{align}
Comparing \eqref{eq:obukhovFE} and \eqref{eq:ccgg-comp2}, the two equations
have the same operator structure in this axial-torsion limit, with the
parameter identification
\begin{equation}
\rho'=-\frac{i\ell}{3},
\qquad
\tau'=\frac{\ell}{12},
\end{equation}
where the imaginary term is necessarily expected from \eqref{eq:obukhov-translational-moment} and \eqref{eq:obukhov-nonminimal}.

\section{The \texorpdfstring{\(\sigma\sigma R\)}{sigma-sigma-R} Identity}
\label{app:sigmaR}

Let $R_{\mu\nu\alpha\beta}$ be antisymmetric in $(\mu\nu)$ and in $(\alpha\beta)$, and define
\begin{equation}
R_{\mu\nu}:=R^\rho{}_{\mu\rho\nu},\quad R:=g^{\mu\nu}R_{\mu\nu},\quad
P:=\frac12\,\varepsilon^{\mu\nu\alpha\beta}R_{\mu\nu\alpha\beta}.
\end{equation}
Decomposition:
\begin{equation}
\sigma^{\alpha\beta}\sigma^{\mu\nu}
=\frac12[\sigma^{\alpha\beta},\sigma^{\mu\nu}]
+\frac12\{\sigma^{\alpha\beta},\sigma^{\mu\nu}\}.
\label{eq:decomp}
\end{equation}
(Anti)commutator identities:
\begin{equation}
\begin{aligned}
{} [\sigma^{\alpha\beta},\sigma^{\mu\nu}]&=2i\Big(
g^{\beta\mu}\sigma^{\alpha\nu}
-g^{\alpha\mu}\sigma^{\beta\nu}\\
&\qquad
-g^{\beta\nu}\sigma^{\alpha\mu}
+g^{\alpha\nu}\sigma^{\beta\mu}
\Big).
\end{aligned}
\label{eq:comm}
\end{equation}
\begin{equation}
\begin{aligned}
\{\sigma^{\alpha\beta},\sigma^{\mu\nu}\}&=2\bigl(
g^{\alpha\mu}g^{\beta\nu}
-g^{\alpha\nu}g^{\beta\mu}
\bigr)\\
&\quad
+2i\varepsilon^{\alpha\beta\mu\nu}\gamma^5.
\end{aligned}
\label{eq:anticomm}
\end{equation}
Apply \eqref{eq:comm} and contract with $R_{\mu\nu\alpha\beta}$:
\begin{align}
[\sigma^{\alpha\beta},\sigma^{\mu\nu}]
R_{\mu\nu\alpha\beta}&=2i\Big(
g^{\beta\mu}\sigma^{\alpha\nu}
-g^{\alpha\mu}\sigma^{\beta\nu}
\nonumber\\
&\quad
-g^{\beta\nu}\sigma^{\alpha\mu}
+g^{\alpha\nu}\sigma^{\beta\mu}
\Big)R_{\mu\nu\alpha\beta}
\nonumber\\
&=2i\Big(
\sigma^{\alpha\nu}R^{\beta}{}_{\nu\alpha\beta}
-\sigma^{\beta\nu}R^{\alpha}{}_{\nu\alpha\beta}
\nonumber\\
&\quad
-\sigma^{\alpha\mu}R_{\mu}{}^{\nu}{}_{\alpha\nu}
+\sigma^{\beta\mu}R_{\mu}{}^{\alpha}{}_{\alpha\beta}
\Big)
\nonumber\\
&=2i\Big(
-\sigma^{\alpha\nu}R_{\nu\alpha}
-\sigma^{\beta\nu}R_{\nu\beta}
\nonumber\\
&\quad
-\sigma^{\alpha\mu}R_{\mu\alpha}
-\sigma^{\beta\mu}R_{\mu\beta}
\Big)
\nonumber\\
&=8i\,\sigma^{\mu\nu}R_{\mu\nu}.
\label{eq:commcontract}
\end{align}
In the third line we used only antisymmetry in the last pair $(\alpha\beta)$ and in the first pair $(\mu\nu)$, e.g.
\[
\begin{gathered}
R^{\beta}{}_{\nu\alpha\beta}
=-R^{\beta}{}_{\nu\beta\alpha}
=-R_{\nu\alpha},\\
R_{\mu}{}^{\nu}{}_{\alpha\nu}=R_{\mu\alpha}.
\end{gathered}
\]
Apply \eqref{eq:anticomm} and contract with $R_{\mu\nu\alpha\beta}$:
\begin{align}
\{\sigma^{\alpha\beta},\sigma^{\mu\nu}\}
R_{\mu\nu\alpha\beta}
&=\Big[
2\bigl(
g^{\alpha\mu}g^{\beta\nu}
-g^{\alpha\nu}g^{\beta\mu}
\bigr)
\nonumber\\
&\quad
+2i\varepsilon^{\alpha\beta\mu\nu}\gamma^5
\Big]R_{\mu\nu\alpha\beta}
\nonumber\\
&=2\bigl(
g^{\alpha\mu}g^{\beta\nu}
-g^{\alpha\nu}g^{\beta\mu}
\bigr)R_{\mu\nu\alpha\beta}
\nonumber\\
&\quad
+2i\,\varepsilon^{\mu\nu\alpha\beta}
R_{\mu\nu\alpha\beta}\gamma^5
\nonumber\\
&=4R+4iP\gamma^5.
\label{eq:anticommcontract}
\end{align}
Sum the two pieces using \eqref{eq:decomp}:
\begin{align}
\sigma^{\alpha\beta}\sigma^{\mu\nu}
R_{\mu\nu\alpha\beta}
&=
\frac12[\sigma^{\alpha\beta},\sigma^{\mu\nu}]
R_{\mu\nu\alpha\beta}
+\frac12\{\sigma^{\alpha\beta},\sigma^{\mu\nu}\}
R_{\mu\nu\alpha\beta}
\nonumber\\
&=
\frac12\bigl(+8i\,\sigma^{\mu\nu}R_{\mu\nu}\bigr)
+\frac12\bigl(4R+4iP\gamma^5\bigr)
\nonumber\\
&=
4i\,\sigma^{\mu\nu}R_{\mu\nu}
+2R+2iP\gamma^5.
\label{eq:final}
\end{align}

\bibliography{refs}

\end{document}